\theoremstyle{plain}
\newtheorem{theorem}{Theorem}[section]
\newtheorem*{conjecture*}{Conjecture}
\newtheorem{corollary}[theorem]{Corollary}
\newtheorem{definition}[theorem]{Definition}
\newtheorem{remark}[theorem]{Remark}
\def\XS{\xspace}
\DeclareMathAlphabet{\mathb}{OML}{cmm}{b}{it}
\def\sbm#1{\ensuremath{\mathb{#1}}}                
\def\sbmm#1{\ensuremath{\boldsymbol{#1}}}          
\def\sbv#1{\ensuremath{\mathbf{#1}}}               
\def\scu#1{\ensuremath{\mathcal{#1\XS}}}           
\def\sbl#1{\ensuremath{\mathbb{#1}}}
\def\sbf#1{\ensuremath{\mathsf{#1}}}
\def\Lb{{\sbm{L}}\XS}  
\def\Mb{{\sbm{M}}\XS}
  \def\ub{{\sbm{u}}\XS}
  \def\vb{{\sbm{v}}\XS}
  \def\wb{{\sbm{w}}\XS}
  \def\xb{{\sbm{x}}\XS}
  \def\yb{{\sbm{y}}\XS}
\def\Bc{{\scu{B}}\XS}   
\def\Cc{{\scu{C}}\XS}
\def\Gc{{\scu{G}}\XS}
\def\Sc{{\scu{S}}\XS}
\def\Cbb{{\sbl{C}}\XS}
\def\Rbb{{\sbl{R}}\XS}
        \def\Phib      {{\sbmm{\Phi}}\XS}
        \def\Psib      {{\sbmm{\Psi}}\XS}
\def\omegab      {{\sbmm{\omega}}\XS}
\def\zerob   {{\sbv{0}}\XS}
\def\eC{\Cbb}
\def\eR{\Rbb}
  \def\ds{{\sbf{d}}\XS}
\def\Ps{{\sbf{P}}\XS}
\newcommand{\RR}{\ensuremath{\mathbb{R}}}
\newcommand{\NN}{\ensuremath{\mathbb N}}
\newcommand{\II}{\ensuremath{\mathbb I}}
\newcommand{\minimize}[2]{\ensuremath{\underset{\substack{{#1}}}
{\mathrm{minimize}}\;\;#2 }}
\newcommand{\Argmind}[2]{\ensuremath{\underset{\substack{{#1}}}%
{\mathrm{Argmin}}\;\;#2 }}
\newcommand{\inter}{\ensuremath{\raisebox{-0.5mm}{\mbox{{$\cap$}}}}}
\renewcommand{\leq}{\ensuremath{\leqslant}}
\renewcommand{\geq}{\ensuremath{\geqslant}}
\renewcommand{\le}{\ensuremath{\leqslant}}
\renewcommand{\ge}{\ensuremath{\geqslant}}
\newcommand{\emp}{\ensuremath{{\varnothing}}}
\newcommand{\Card}[1]{\ensuremath{\sharp \, #1}}
\newcommand{\menge}[2]{\big\{{#1}~\big |~{#2}\big\}}
\tikzstyle{every picture}+=[remember picture]
\tikzstyle{mybox} = [draw=lightgreen, fill=white, very thick,
\tikzstyle{fancytitle} =[fill=lightgreen!30, text=black, ellipse]
\tikzstyle{myquestionbox} = [draw=orange, fill=orange!2, very thick, rectangle, rounded corners, inner sep=5pt, inner ysep=7pt]
\tikzstyle{fancyquestiontitle} =[fill=orange!80, text=white, ellipse]
\tikzstyle{mydefbox} = [rectangle,draw=black, rounded corners, very thick, inner sep=5pt, inner ysep=3pt]
\tikzstyle{myalgobox} = [rectangle,draw=lightgreen, fill=lightgreen!5, rounded corners, very thick, inner sep=5pt, inner ysep=3pt]
\tikzstyle{mythmbox} = [rectangle,draw=orange, fill=orange!5, rounded corners, very thick, inner sep=5pt, inner ysep=3pt]
\tikzstyle{notmythmbox} = [rectangle,draw=orange, fill=white, rounded corners, very thick, inner sep=5pt, inner ysep=3pt]
\tikzstyle{fancythmtitle} =[fill=orange!80, text=white, ellipse]
\tikzstyle{important_col} = [rectangle,draw=orange, fill=orange!5, rounded corners, thick, inner sep=1pt, inner ysep=2pt,text centered]
\newcommand{\tca}{{\widetilde{\mathcal{C}}_\alpha}}
\begin{document}

\title[Article Title]{
A Plug-and-Play Method with Inpainting Network for Bayesian Uncertainty Quantification in Imaging}



\author[1,3]{\fnm{Xiaoyu} \sur{Wang}}\email{xiaoyu.wang@hw.ac.uk}
\equalcont{These authors contributed equally to this work.}

\author[2,3]{\fnm{Michael} \sur{Tang}}\email{michael.cy.tang@gmail.com}
\equalcont{These authors contributed equally to this work.}

\author*[1,3]{\fnm{Audrey} \sur{Repetti}}\email{a.repetti@hw.ac.uk}

\affil[1]{\orgdiv{School of Mathematics and Computer Sciences and School of Engineering and Physical Sciences}, \orgname{Heriot-Watt University}, \orgaddress{\city{Edinburgh}, \country{UK}}}

\affil[2]{\orgdiv{School of Mathematics}, \orgname{University of Edinburgh}, \orgaddress{\city{Edinburgh}, \country{UK}}}

\affil[3]{\orgname{Maxwell Institute for Mathematical Sciences}, \orgaddress{\city{Edinburgh}, \country{UK}}}

\abstract{
We contribute to an uncertainty quantification problem in imaging that evaluates a hypothesis test questioning the existence of local ``artefacts'' appearing in the maximum \textit{a posteriori} (MAP) estimate (obtained from standard numerical tools). Such a method, called Bayesian uncertainty quantification by optimization (BUQO), was introduced a few years ago as an efficient and scalable alternative to sampling methods when per-pixel error-bars are not needed. 
BUQO formulates a hypothesis test for probing the existence of local structures in the MAP estimate as a minimization problem, that can be solved efficiently with standard optimization algorithms. 
In this context, BUQO requires a ``mathematical'' definition of the ``local artefact''. 
This definition can be interpreted as an inpainting of the structure. However, only simple hand-crafted techniques have been proposed so far due to the complexity of the problem.
In this work, we propose a data-driven alternative to BUQO where the inpainting procedure in the algorithm is performed using a convolutional inpainting neural network (NN).
This results in a plug-and-play algorithm, based on the primal-dual Condat-V\~{u} iterations, where the inpainting procedure is performed with a NN. 
The proposed approach is assessed on two image reconstruction problems inspired by medicine. We specifically perform simulations on two Fourier undersampling problems (discrete and non-uniform) encountered in magnetic resonance imaging, as well as a computed tomography problem using the Radon measurement operator.

}

\keywords{Uncertainty quantification, 
hypothesis testing, 
computational imaging, 
plug-and-play method, 
primal-dual algorithm }



\maketitle

\section{Introduction}
Imaging problems across many modalities can be stated as inverse problems. The aim is to find an estimate $\xb^\dagger \in \eR^N$ of an original unknown image $\overline{\xb} \in \eR^N$, from degraded measurements $\yb \in \eC^M$, observed through a forward model of the form
\begin{equation}\label{eq:model}
\yb = \Phib \overline{\xb} + \wb.
\end{equation}
Here $\Phib \colon \eR^N \to \eC^{M}$ is the measurement operator, which we assume to be known and linear, and $\wb\in \eC^M$ is a realization of an independent identically distributed (iid) random variable. 
In general, this inverse problem is ill-posed and/or ill-conditioned, so the estimate $\xb^\dagger$ cannot be obtained with a direct inversion model, and iterative approaches are required \cite{chambolle2016introduction, benning2018modern}. 

Following a Bayesian framework, the posterior distribution combines likelihood $p (\yb | \xb)$ and prior $p (\xb)$ distributions using the Bayes' theorem, i.e.,
\begin{equation*}
p (\xb | \yb) \propto p (\yb | \xb) p (\xb).
\end{equation*}
Furthermore, similarly to \cite{RPW18,Per17}, we assume log-concave uniform likelihood and \textit{prior} distribution, i.e.,  $p (\yb | \xb) = \exp( - f_\yb(\Phib \xb))$ and $p (\xb) = \exp(-g(\xb))$ where $f_\yb$ and $g$ are convex. Then
the posterior distribution of the problem can be expressed as 
\begin{equation}\label{eq:post-def}
p (\xb | \yb) \propto \exp\Big( -f_\yb(\Phib \xb) - g(\xb) \Big) ,
\end{equation}
where
the likelihood 
$f_\yb(\Phib \, \cdot) $ is associated with the statistics of the forward model~\eqref{eq:model}, and the \textit{prior} $ g$ is used to incorporate \textit{a priori} information known about the image of interest, to help to overcome ill-posedness and/or ill-conditionedness of the inverse problem. 
Classical choices include feasibility constraints (e.g., positivity for intensity images), or functions promoting either smoothness or sparsity (e.g., $\ell_2$ or $\ell_1$ norms) possibly in some transformed domain such as wavelet, Fourier or total variation (TV) (see e.g., \cite{Mallat, ROF1992}).
A classical inference approach is to find a maximum \textit{a posteriori} (MAP) estimator, defined as
\begin{equation}\label{eq:MAP}
\xb^\dagger = \Argmind{\xb\in \eR^N} f_{\yb}( \Phib\xb) + g(\xb).
\end{equation}
Problem~\eqref{eq:MAP} can be solved efficiently using optimization algorithms. Since early 2000's, proximal algorithms have been the state-of-the-art in imaging problems \cite{combettes2011proximal}. They are scalable to high-dimensional problems, and versatile to handle sophisticated problems involving non-smooth functions and linear operators. During the last decade, these algorithms have been significantly improved by the adoption of deep learning. In particular, in proximal algorithms, the operator handling the \textit{prior} term can be replaced by a denoising neural network (NN), leading to ``plug-and-play'' (PnP) methods~\cite{Zhang17, Zhang19, Ryu2019, Hurault2021, Schonlieb2018, Ahmad2020, Hertrich2021, Terris21, PRTW21, RTWP2022}. 

Although proximal optimization algorithms are very efficient and widely used for obtaining MAP estimates, they provide a point estimate only, without any additional information. In the absence of the ground truth, these approaches do not quantify any of the inherent uncertainties in such estimates. Uncertainty quantification (UQ) provides rigorous additional information to accompany qualitative estimates, and the rich theory of Bayesian inference can provide such analyses \cite{Robert2004, Robert2007}. UQ can be of great benefit with respect to decision making, for example using medical magnetic resonance (MR) or computed tomography (CT) scans to plan a course of treatment, where it is very important to know whether or not specific structures in a MAP estimate can be trusted to a quantified measure of certainty. 

Classically, sampling methods such as Markov chain Monte Carlo (MCMC) are used to generate estimators of the original unknown image by drawing random samples from the posterior distribution, and hence approximating the true distribution of interest \cite{Robert2007, Robert2004}. 
Such methods allow to perform UQ by evaluating confidence intervals and hypothesis testing. 
Unfortunately, traditional MCMC methods are computationally expensive, preventing their use in many high dimensional imaging applications. Over the past few years, many authors have proposed to use optimization ``tricks'' on MCMC methods to make them more scalable and better adapted to imaging. For instance, enabling the use of non-smooth functions \cite{Pereyra16, Pereyra20, Pereyra21}, using splitting techniques \cite{Vono21, Thouvenin22}, or using learned \textit{priors} \cite{laumont2022bayesian, holden2022bayesian, coeurdoux2024plug}. Although these methods do offer more scalability to draw each sample, they still necessitate a substantial number of samples to achieve an accurate approximation of the distribution of interest (due to the size $N$ of the image, i.e., the unknown parameters) \cite{biquard2025variational,zach2025statistical,thong2024bayesian}.

Deep learning based UQ methods have recently emerged as powerful tools in computational imaging, enabling systematic characterization of uncertainty inherent in ill-posed imaging problems \cite{siahkoohi2020deep, ekmekci2022uncertainty, tanno2019uncertainty, abdar2021review, zhu2023denoising, kawar2022denoising, biquard2025variational, crafts2025benchmarking, munier2025jackpot, zach2025statistical}. 
Recent evaluations however suggest that despite their remarkable image estimation accuracy, Bayesian deep learning approaches may fail to deliver reliable UQ results~\cite{thong2024bayesian}. 
Further, recent evaluations suggest that high quality image reconstruction does not necessarily translate into equally reliable UQ results, and that the quality of the resulting posterior approximation can depend significantly on adequate model choices~\cite{thong2024bayesian}.


Existing approaches often rely on yielding uncertainty measures by generating pixel-wise uncertainty or variance maps through Bayesian posterior sampling. 
Recently a different framework has been proposed,
to perform hypothesis test directly on the MAP estimate. This method, referred to as BUQO \cite{RPW18, Per17, RPW19, PereyraMNRAS1, PereyraMNRAS2, PNAS2022, liaudat2024scalable} enables scalable Bayesian UQ for imaging relying on optimisation-based methods. Unlike MCMC methods, this method does not provide per-pixel error-bars, but performs hypothesis tests to question the existence of local structures appearing in the MAP estimate (see an example in Figure~\ref{fig:pnp-buqo-demo}). Nevertheless, BUQO can also be modified to obtain ``per-patch'' error-bars \cite{PereyraMNRAS2, liaudat2024scalable}. 
In a nutshell, BUQO ``works'' on the pre-computed MAP estimate, formulating the hypothesis test to probe a local structure as a minimization problem. This can then be solved by leveraging the same scalable proximal algorithms as used to obtain the MAP estimate. 
Thus, such an approach is more scalable than traditional MCMC algorithms, particularly in high dimensional settings where computing the MAP is much cheaper than estimating the posterior distribution. 
Further, the practical BUQO framework is different from standard UQ methods (including MCMC and deep learning based methods), as it directly works on the MAP estimate. This particularity can be useful in practice as BUQO can be used on local computers by practicians without needing to be implemented on the imaging devices themselves (e.g., on scanners or telescopes).
Nevertheless, a precise mathematical definition identifying what is a ``structure'' in the MAP estimate is crucial, and highly dependent on the type of structure of interest. In practice, this definition can be interpreted as an inpainting task, but only simple hand-crafted techniques have been proposed in the literature so far, due to the complexity of the problem.

In this work, we propose a data-driven approach that employs a convolutional neural network (CNN) for the inpainting step within the BUQO framework. This allows for a more versatile formulation of the original minimization problem proposed in \cite{RPW19}, to perform the hypothesis test. 
In this formulation, the inpainting CNN explicitly appears in a squared $\ell_2$ function, that will be handled exclusively through its gradient in a PnP algorithm based on a primal-dual forward-backward algorithm \cite{Condat13, Vu13, Komodakis2014}.
While our method employs the network for inpainting rather than denoising, as done in standard PnP schemes, the idea of ``plugging in'' the inpainting CNN remains very similar. Indeed, the objective is to replace a task that could be done by standard optimization tools, such as a gradient or proximity operator, by a NN specifically trained for the task. We highlight the performances of the proposed PnP-BUQO approach, to perform hypothesis testing on both simulated MR and CT imaging problems. In particular, we show that the proposed PnP-BUQO enables more realistic visual results compared to the original BUQO inpainting method. 

The remainder of the article is organized as follows. Section~\ref{section:Bayes} recalls the necessary background in Bayesian inference and the classical BUQO hypothesis testing framework. In Section~\ref{section:problem} we introduce the fixed-point inpainting operator strategy and formulate our proposed plug-and-play BUQO approach. Section~\ref{section:simulations} details the numerical simulation setup. In Section~\ref{section:simulation-results}, we present a comprehensive collection of numerical results and comparative analysis. Finally, we conclude in Section~\ref{section:discussion}, discuss potential extensions and outline avenues for future work.


\begin{figure}[t]
    \centering
    \includegraphics[width=0.5\linewidth]{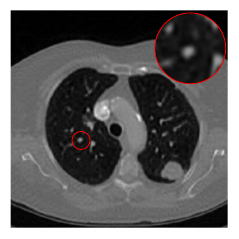}
    \hspace{-3mm}
    \includegraphics[width=0.5\linewidth]{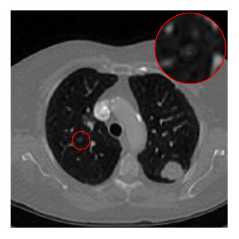}
    \caption{\small
    Illustrative example showing (Left) a MAP estimate $\xb^\dagger$ for CT reconstruction, with zoomed in region showing a local structure, and (Right) the same MAP estimate without the structure, denoted by $\xb^\dagger_{\Sc}$. 
    Objective is to determine whether the structure is true, or a reconstruction artefact.}
    \label{fig:pnp-buqo-demo}
\end{figure}

\begin{figure*}[t]
\centering
\includegraphics[scale=0.5]{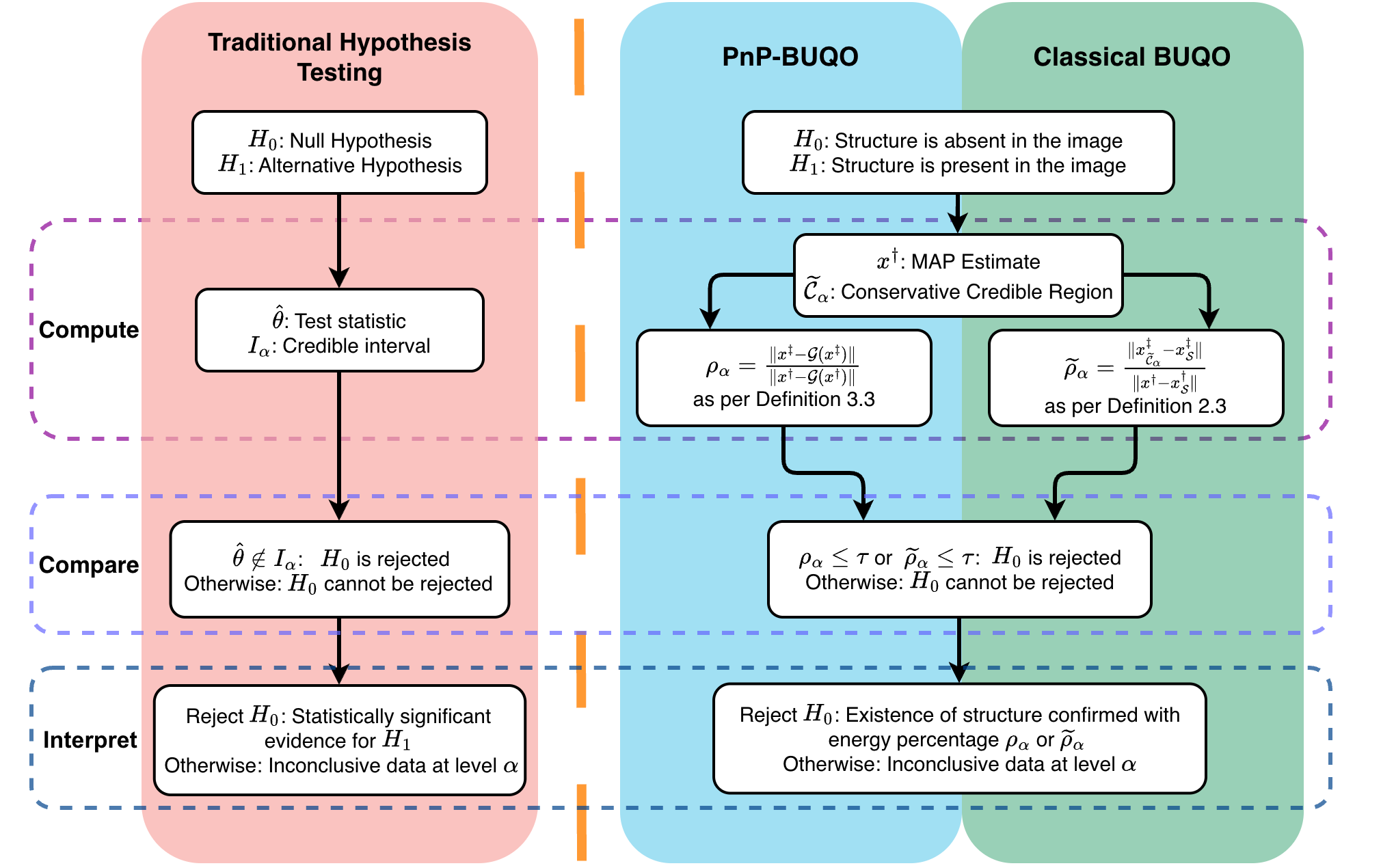}
\vspace*{+0.1in}
\caption{\small
A high-level overview of the workflow for the proposed PnP-BUQO method, comparing to the classical hypothesis testing, and the classical BUQO approach. }
\label{fig:comparison_pnp_buqo}
\end{figure*}

\section{Bayesian background and BUQO principle}\label{section:Bayes}


In this work, we focus on the case where a MAP estimate $\xb^\dagger$ is pre-computed for an inverse imaging problem of the form~\eqref{eq:model}, and we are unsure on whether some \textit{structure} appearing in $\xb^\dagger$ is true or fake, i.e., is supported by the data $\yb$, or is a reconstruction artefact (see an example in Figure~\ref{fig:pnp-buqo-demo}). 


Such a question can be addressed by performing a Bayesian hypothesis test, defined by postulating two hypotheses:
\begin{center}
$H_0$: Structure of interest is \textbf{ABSENT} \\ in the ground truth image;\\[0.1cm] 
$H_1$: Structure of interest is \textbf{PRESENT} \\ in the ground truth image.
\end{center}
Then, the test aims to use the posterior distribution to decide whether the null hypothesis $H_0$ can be rejected or not. Formally, according to Bayesian decision theory, $H_0$ is rejected (in favour of $H_1$) with significance $\alpha\in (0,1)$ if 
\begin{equation}\label{proba:H0-rej}
    \Ps (H_0 \mid \yb) \leq \alpha,
\end{equation}
in which case $ \Ps (H_1 \mid \yb) > 1-\alpha$. To compute the probability in \eqref{proba:H0-rej}, a standard method would be to rely on stochastic sampling (i.e., MCMC). 
Instead, the BUQO method \cite{RPW19} evaluates a Bayesian hypothesis test on the degree of support for specific image structures appearing in the MAP estimate $\xb^\dagger$, leveraging the scalability of convex optimization methods. 

\subsection{BUQO hypothesis testing}
\label{Ssec:BUQO-hyp}

In \cite{RPW19}, the null hypothesis $H_0$ is associated with a convex set $\Sc \subset \RR^N$ containing all possible images of $\RR^N$ \textbf{without} the structure of interest, i.e.,
\begin{equation}\label{proba:H0-rej-S}
    \Ps (H_0 | \yb ) = \Ps(\overline{\xb} \in \Sc \mid \yb) = \int_{\Sc} p(\xb | \yb) \ds \xb.
\end{equation}
A reminder of the hand-crafted definition of $\Sc$ proposed in \cite{RPW19} is provided in Appendix~\ref{appendix:struct}.
According to~\eqref{proba:H0-rej}, $H_0$ is rejected with significance $\alpha\in (0,1)$ if $\Ps(\overline{\xb} \in \Sc \mid \yb) \le \alpha$.
To evaluate this condition, while avoiding computing the high-dimensional integral in \eqref{proba:H0-rej-S}, BUQO aims to compare $\Sc$ with a posterior credible set, i.e., the region of the solution space where most of the posterior probability mass of $\xb|\yb$ lies \cite{Per17}. Formally, a set $\Cc_\alpha$ is a posterior credible region, with confidence level $(1-\alpha)$ for $\alpha\in (0,1)$, if $\Ps(\xb \in \Cc_\alpha|\yb) = 1-\alpha$. Since computing such a probability is highly expensive, authors in~\cite{RPW19} make use of a conservative credible region $\widetilde{\Cc}_\alpha$ introduced in \cite{Per17} and defined as
\begin{equation}    \label{def:Calpha}
    \widetilde{\Cc}_{\alpha}
    = \menge{\xb \in \RR^N}{ f_{\yb}( \Phib\xb) + g(\xb) \le \widetilde{\eta}_\alpha},
\end{equation}
where $f_{\yb}$ and $g$ are defined in~\eqref{eq:MAP}, and $\widetilde{\eta}_{\alpha} = f_{\yb}( \Phib\xb^\dagger) + g(\xb^\dagger) + N ( \sqrt{16 \log(3/\alpha)/N} + 1)$. It has been shown in \cite{Per17} that
\begin{equation*}
    \Ps(\overline{\xb} \in \widetilde{\Cc}_\alpha \mid \yb) \ge 1-\alpha,
\end{equation*}
and that $\widetilde{\Cc}_\alpha$ is the tightest approximation of the highest-posterior density region\footnote{The highest-posterior density region is defined as $\Cc_\alpha^* = \menge{\xb \in \RR^N}{ f_{\yb}( \Phib\xb) + g(\xb) \le {\eta}_\alpha^*}$, where $\eta_\alpha^* \in \RR$ is chosen such that $\int_{\Cc_\alpha^*} p(\xb|\yb \ds \xb = 1-\alpha$.} that can be obtained from the knowledge of the MAP estimate~\cite{Per17}. 


The BUQO approach makes use of the sets $\Sc$ and $\widetilde{\Cc}_\alpha$ to perform the hypothesis test and verify if $\Ps(\overline{\xb} \in \Sc \mid \yb) \le \alpha$ leveraging the following result, adapted from~\cite[Thm~3.2]{RPW19} and~\cite[Thm.~3.1]{Per17}.
\begin{theorem}\label{thm:1}
    Let $p(\xb|\yb)$ be defined as in \eqref{eq:post-def}, where $f_\yb$ and $g$ are assumed to be convex.
    Let $\alpha \in (4\exp(-N/3), 1)$. 
    If $\Sc \cap \widetilde{\Cc}_\alpha = \emp$, then $\Ps(H_0 | \yb)\leq \alpha$, and hence $H_0$ is rejected in favour of $H_1$. 
\end{theorem}

\begin{remark}
    It should be noted that the theorem above does not require convexity of set $\Sc$. In fact, to prove this result we only need to observe that
    (i) by definition of $\Sc$, we have $\Ps(\overline{\xb} \in \Sc | \yb) = \Ps (H_0 | \yb)$, 
    and (ii) if $\widetilde{\Cc}_\alpha \cap \Sc = \emp$ then $\Sc \subset \mathbb R^N \setminus \widetilde{\Cc}_\alpha$.
    Hence, combining (i) and (ii), we get 
    \begin{equation*}
        \Ps (H_0 | \yb) = \Ps(\overline{\xb} \in \Sc | \yb) = 1 - \Ps(\overline{\xb} \in \widetilde{\Cc}_\alpha | \yb).
    \end{equation*}
    All these facts are true for any subset $\Sc$ of $\mathbb R^N$ (without convexity assumption). The final result then follows from the fact that $\widetilde{\Cc}_\alpha$ is a convex conservative credible region, as $f_\yb$ and $g$ are both convex, using \cite[Thm.~3.1]{Per17}.
\end{remark}

According to Theorem~\ref{thm:1}, to perform the hypothesis test, it suffices to determine whether or not the sets $\tca$ and $\Sc$ are disjoint. If so, then all the images in $\tca$ contain the structure of interest and $H_0$ is rejected in favor of $H_1$ with confidence $\alpha$. Otherwise, there exists an estimate $\xb$ which satisfies the observation $\yb$ and is structure-free, so we may not draw any conclusions with respect to the hypothesis test. It is to be noted that, it is not sufficient to have $\Sc \cap \tca \neq \emp$ to accept the hypothesis $H_0$, since there are images containing the structure of interest and others that do not, all of them being supported by the measured data.

\subsection{Variational problem}
\label{Ssec:BUQO-var-form}

In this section we recall the BUQO method proposed in~\cite{RPW18, RPW19} to evaluate whether the intersection $\Sc \cap \tca$ is empty. 

We first need to introduce some notation that will be used to provide a mathematical definition of set $\Sc$. Suppose we have identified an area of the MAP estimate (i.e., a collection of pixels denoted $\II_\Mb \subset \{1, \ldots, N\}$) which comprise a structure of interest.
Let $\xb \in \RR^N$.
Let $N_\Mb = \Card{\II_\Mb}$, and $\Mb\in \{0,1\}^{N_\Mb \times N}$ be the linear masking operator such that $\Mb\xb = (\xb_j)_{j\in \II_\Mb} \in \RR^{N_\Mb}$ is the restriction of $\xb$ to the structure of interest.  Further let $\Mb^c \in \{0,1\}^{(N-N_\Mb)\times N}$ be the complementary mask so that $\Mb^c \xb = (\xb_j)_{j\not\in \II_\Mb} \in \RR^{(N-N_\Mb)}$ is a vector whose entries are not part of the structure of interest.

\paragraph{Hypothesis test as a minimization problem}
The authors in \cite{RPW19} propose to reformulate the problem of determining if $\Sc \cap \tca = \emp$ as measuring the distance between $\Sc$ and $\tca$. If this distance is not $0$, then the intersection is empty, otherwise it is not.
Measuring the distance between $\Sc$ and $\tca$ can be written as a minimization problem:
\begin{equation}\label{eq:min}
    \minimize{(\xb_{\Sc}, \xb_{\widetilde{\Cc}_\alpha}) \in \RR^{2N}} \frac \gamma 2 \| \xb_{\Sc} - \xb_{\widetilde{\Cc}_\alpha}\|_2^2 +\iota_{\Sc}(\xb_{\Sc}) + \iota_{\widetilde{\Cc}_\alpha} (\xb_{\widetilde{\Cc}_\alpha}),
\end{equation}
where $\gamma>0$ and $\iota_\Sc$ denotes the indicator function of set $\Sc$, which is equal to $0$ if its argument belongs to $\Sc$, and $+\infty$ otherwise. In other words, we want to find two images, $\xb_{\Sc}$ in $\Sc$ and $\xb_{\widetilde{\Cc}_\alpha}$ in $\widetilde{\Cc}_\alpha$, such that the $\ell_2$ norm of the difference between the two is minimal.
In \eqref{eq:min}, $\gamma>0$ is a free parameter not impacting the solution, but used in practice for acceleration purpose (see \cite{RPW18} for details).

From Theorem~\ref{thm:1}, we deduce the following:
\begin{corollary}\label{cor:1}
    Suppose that $\alpha \in ]4\exp(-N/3), 1[$ and let $(\xb_{\Sc}^\ddagger, \xb_{\tca}^\ddagger) \in \Sc \times \widetilde{\Cc}_\alpha$ be a solution to~\eqref{eq:min}. 
    If $\| \xb_{\Sc}^\ddagger - \xb_{\widetilde{\Cc}_\alpha}^\ddagger\| > 0$, then $\Ps(H_0 \mid \yb)\leq \alpha$, and hence $H_0$ is rejected. 
\end{corollary}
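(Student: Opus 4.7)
The plan is to argue by contraposition and reduce everything to Theorem~\ref{thm:1}. Concretely, I want to show that the strict positivity $\| \xb_{\Sc}^\ddagger - \xb_{\tca}^\ddagger\| > 0$ at a minimizer of~\eqref{eq:min} forces the two sets $\Sc$ and $\tca$ to be disjoint, after which Theorem~\ref{thm:1} yields the conclusion immediately.

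First, I would compute the optimal value of~\eqref{eq:min} under the assumption that $\Sc \cap \tca \neq \emp$. Picking any $\xb^\ast \in \Sc \cap \tca$ and substituting $\xb_{\Sc} = \xb_{\tca} = \xb^\ast$ produces a feasible pair at which both indicator terms vanish and the squared $\ell_2$ distance is zero. Consequently the infimum in~\eqref{eq:min} is at most $0$, and since the objective is nonnegative, it equals $0$.

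Next, since $(\xb_{\Sc}^\ddagger, \xb_{\tca}^\ddagger) \in \Sc \times \tca$ is a solution, the indicators vanish there too, and the total objective attains its minimum value of $0$. Because $\gamma > 0$ by assumption, the term $\frac{\gamma}{2}\|\xb_{\Sc}^\ddagger - \xb_{\tca}^\ddagger\|_2^2$ must itself be zero, which contradicts the hypothesis $\|\xb_{\Sc}^\ddagger - \xb_{\tca}^\ddagger\| > 0$. Hence $\Sc \cap \tca = \emp$, and Theorem~\ref{thm:1} (applicable since $\alpha \in \,]4\exp(-N/3),1[$) gives $\Ps(H_0 \mid \yb) \le \alpha$.

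The argument is essentially a one-line consequence of Theorem~\ref{thm:1} together with the fact that the squared Euclidean distance between two sets is zero if and only if they have nonempty intersection. There is no serious obstacle; the only points to be careful about are that $\gamma$ must be strictly positive for the contradiction step to go through, and that the hypothesis ensures we are in the regime where Theorem~\ref{thm:1} is valid.
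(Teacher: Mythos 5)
Your proof is correct and matches the paper's intended deduction: the paper states Corollary~\ref{cor:1} as an immediate consequence of Theorem~\ref{thm:1} without printing a proof, and the argument it does print for the analogous Corollary~\ref{cor:2} is exactly your mechanism (minimality of the solution plus strict positivity of the distance forces the intersection to be empty). Your observation that $\gamma>0$ is needed to pass from a zero objective value to a zero distance is a correct and worthwhile detail to make explicit.
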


Note that, in practice, achieving $\| \xb_{\Sc}^\ddagger - \xb_{\widetilde{\Cc}_\alpha}^\ddagger\| = 0$ is very difficult due to numerical approximations when solving~\eqref{eq:min} iteratively. Instead, a threshold value can be fixed, below which it can be assumed that $\| \xb_{\Sc}^\ddagger - \xb_{\widetilde{\Cc}_\alpha}^\ddagger\| = 0$. In practice this value should be chosen small enough looking at the energy of the structure in the MAP estimate $\xb^\dagger$ \cite{RPW18, PNAS2022}. To this aim, the authors in \cite{RPW18} introduced the parameter below that acts similarly as a $p$-value in a standard hypothesis testing setting.

\begin{definition}\label{def:rho_tilde}
Let \(\xb^\dagger \in \RR^N\) be a MAP estimate and $\xb^\dagger_\Sc \in \Sc$ be a structure-free version of $\xb^\dagger$. Let $(\xb_{\Sc}^\ddagger, \xb_{\tca}^\ddagger) \in \Sc \times \widetilde{\Cc}_\alpha$ be a solution to~\eqref{eq:min}. The normalized intensity of the structure in $\xb_{\widetilde{\Cc}_\alpha}^\ddagger$ is defined as
\begin{equation*}
    \widetilde{\rho}_\alpha = \frac{\|  \xb_{\widetilde{\Cc}_\alpha}^\ddagger - \xb_{\Sc}^\ddagger\|}{\| \xb^\dagger - \xb^\dagger_\Sc \|} \in [0,1].
\end{equation*}
\end{definition}
Following Corollary~\ref{cor:1}, the larger is $\widetilde{\rho}_\alpha$, and the more confident we are in rejecting the null hypothesis $H_0$. This value can also be interpreted as an upper bound on the percentage of energy of the structure that is supported by the data. 

\paragraph{BUQO implementation}

BUQO performs hypothesis tests by solving problem~\eqref{eq:min}. To solve such a problem, one can use proximal splitting algorithms \cite{combettes2011proximal, CombettesPesquet2021, BauschkeCombettes2011, chambolle2016introduction}, in particular primal-dual methods~\cite{Komodakis2014}. To this aim, it is crucial define $\Sc$ such that the projection onto this set is feasible. 

In previous works \cite{RPW19, PNAS2022}, the set $\Sc$ was defined as an intersection of convex sets, i.e., $\Sc = \cap_{s=1}^{S} \Sc_s$, where, for every $s \in \{1, \ldots, S\}$, $\Sc_s$ is convex, closed and proper.
The choice of sets $(\Sc_s)_{1 \le s \le S}$ then depends on the type of structure for which one wants to perform the hypothesis test. In general, one of the sets aims to \textit{remove} the structure, another to \textit{inpaint}, and other sets can be added for physical constraints (e.g.,  positivity constraint for intensity images). 
Examples are provided in Appendix~\ref{appendix:struct}.

This definition of $\Sc$ suffers a number of drawbacks. In particular, there are multiple hyper-parameters that must be manually tuned by the user. These parameters can be seen as tolerances the user allow on how \textit{smooth} they want the inpainting to be, or how much energy can be considered to suppose the structure is absent (i.e., to \textit{remove} the tructure). Empirical choices of these parameters are proposed in \cite{RPW18, RPW19, PNAS2022}, based on statistics computed on the MAP estimate. However this approach lacks generalization. In addition, some of the operators used in previous works are application-specific, and/or image-dependent (see Appendix~\ref{appendix:struct}), further emphasizing the need for more generic definitions of $\Sc$.



\section{Inpainting-PnP primal-dual method for hypothesis testing}\label{section:problem}

Building on the classical BUQO formulation and motivated by the limitations of hand-crafted inpainting, we are now ready to propose a fully data-driven BUQO framework. In Figure~\ref{fig:comparison_pnp_buqo}, we provide a high-level overview of the workflow for the proposed PnP-BUQO, explaining differences and similarities compared to the classical BUQO approach, as well as to classical hypothesis testing.

\subsection{Proposed fixed-point formulation}
\label{Ssec:BUQO-DL-form}

A major contribution of this work is the proposal of a data-driven definition for structure-free images in terms of a (not necessarily linear) inpainting operator. Hence, we first define such an inpainting operator.
\begin{definition} \label{def:inp-op}
Let $\xb \in \eR^N$ be an image, and $\Mb \in \{0,1\}^N$ be a binary mask identifying a localized structure in $\xb$. Then
\begin{enumerate}
    \item 
    $\Gc \colon \eR^N \times \{0,1\}^N \rightarrow \eR^N$ is an inpainting operator for $\Mb$ if $\Mb^c \Gc (\xb, \Mb) = \Mb^c \xb$, 
    \item
    we say that $\xb$ is \emph{structure-free} (with respect to $\Mb$) if it is a fixed point of $\Gc(\cdot, \Mb)$, i.e., if $\Gc(\xb, \Mb) = \xb$. 
\end{enumerate}
\end{definition}

To simplify notation, in the following we omit $\Mb$ when using $\Gc$. Therefore we define $\Sc$ to be the set of fixed points of $\Gc$ (with respect to $\Mb$), by
\begin{equation}\label{eq:Sc}
\Sc = \menge{\xb \in C }{ \xb \approx \Gc(\xb) },
\end{equation}
where $C \subset \RR^N$ is a feasible set for image $\xb$, e.g., corresponding to $[0,+\infty)^N$ for intensity images, or $[0,1]^N$ for normalized images. In the remainder we assume that $\tca \subset C$, that is, the MAP estimate is feasible. The approximation in~\eqref{eq:Sc} is necessary to ensure that the probability of $\Sc$ is non-zero.

While $\Gc$ is stated in terms of an arbitrary inpainting operator, NNs form a canonical family of operators from which we can choose a particular application-specific $\Gc$. This generality allows the expert practitioner to be consulted in crafting a bespoke NN, or transferring an existing model, to generate meaningful structure-free images. Although we will not focus on a particular definition for $\Gc$ in this and the following section, we will propose an inpainting convolutional neural network (CNN) in Section~\ref{Sec:inp-cnn}.

Using \eqref{eq:Sc}, Definition~\ref{def:inp-op}, and building on BUQO, we propose to perform the hypothesis test by solving the minimization problem
\begin{equation}\label{eq:BUQO_Fixed}
\minimize{\xb \in C}
\dfrac{\zeta}{2}\| \xb - \Gc (\xb)\|_2^2
+ \iota_{\tca}(\xb) ,
\end{equation}
where $C \subset \RR^N$.
Solving \eqref{eq:BUQO_Fixed} allows us to find an image $\xb^\ddagger \in \RR^N$ that belongs to $\tca$. Then, according to~\eqref{eq:Sc}, if $\| \xb^\ddagger - \Gc (\xb^\ddagger)\|_2$ is sufficiently small, it means that $\xb^\ddagger$ also belongs to $\Sc$. 
Thus, the following result can be deduced from Theorem~\ref{thm:1}.
\begin{corollary}\label{cor:2}
Let $\Sc = \menge{\xb \in C}{ \xb = \Gc(\xb)}$.
    Suppose that $\alpha \in ]4\exp(-N/3), 1[$ and let $\xb^\ddagger \in \tca $ be a solution to~\eqref{eq:BUQO_Fixed}. 
    If $\| \xb^\ddagger - \Gc (\xb^\ddagger)\| > 0$, then $\Ps(H_0 \mid \yb)\leq \alpha$, and $H_0$ is rejected. 
\end{corollary}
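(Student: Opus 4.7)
The plan is to deduce Corollary~\ref{cor:2} from Theorem~\ref{thm:1} by a short contrapositive argument: I will show that the hypothesis $\|\xb^\ddagger - \Gc(\xb^\ddagger)\| > 0$ on a minimizer forces $\Sc \cap \tca = \emp$, after which Theorem~\ref{thm:1} immediately yields $\Ps(H_0\mid \yb)\le \alpha$.

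More concretely, I would argue by contradiction. Assume $\Sc \cap \tca \neq \emp$, and pick some $\xb^* \in \Sc \cap \tca$. By the definition of $\Sc$ in the statement of the corollary, $\xb^* \in C$ with $\xb^* = \Gc(\xb^*)$, and since $\xb^* \in \tca$ we have $\iota_{\tca}(\xb^*) = 0$. Evaluating the objective of~\eqref{eq:BUQO_Fixed} at $\xb^*$ therefore gives
\begin{equation*}
\frac{\zeta}{2}\|\xb^* - \Gc(\xb^*)\|_2^2 + \iota_{\tca}(\xb^*) = 0.
\end{equation*}
Because the objective in~\eqref{eq:BUQO_Fixed} is a sum of a nonnegative quadratic term and an indicator function, it is bounded below by $0$, so the infimum is attained at value $0$.

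Next, I would use that $\xb^\ddagger$ is a solution to~\eqref{eq:BUQO_Fixed}, hence a minimizer: it must realize this infimum, so
\begin{equation*}
\frac{\zeta}{2}\|\xb^\ddagger - \Gc(\xb^\ddagger)\|_2^2 + \iota_{\tca}(\xb^\ddagger) = 0.
\end{equation*}
Each summand being nonnegative, both vanish; in particular $\|\xb^\ddagger - \Gc(\xb^\ddagger)\|_2 = 0$, contradicting the hypothesis $\|\xb^\ddagger - \Gc(\xb^\ddagger)\| > 0$. Consequently $\Sc \cap \tca = \emp$, and since $\alpha \in ]4\exp(-N/3), 1[$, Theorem~\ref{thm:1} applies and gives $\Ps(H_0\mid \yb)\le \alpha$, so that $H_0$ is rejected.

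Conceptually this is a direct transcription of Corollary~\ref{cor:1}, but with the fixed-point characterization of $\Sc$ replacing the intersection-of-convex-sets formulation, so there is no serious obstacle; the only small point requiring care is ensuring the $C$-membership in the definition of $\Sc$ is consistent with working inside $\tca$, which is guaranteed by the standing assumption $\tca \subset C$ noted just after~\eqref{eq:Sc}. Aside from this, the argument is a one-line consequence of non-negativity of the objective and optimality of $\xb^\ddagger$.
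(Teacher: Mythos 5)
Your proof is correct and is essentially the paper's own argument: both rest on the optimality of $\xb^\ddagger$ for~\eqref{eq:BUQO_Fixed} to conclude that $\|\xb^\ddagger - \Gc(\xb^\ddagger)\| > 0$ forces every $\xb \in \tca$ to satisfy $\|\xb - \Gc(\xb)\| > 0$, hence $\Sc \cap \tca = \emp$, and then invoke Theorem~\ref{thm:1}. The only difference is presentational — you argue by contradiction from a hypothetical $\xb^* \in \Sc \cap \tca$, while the paper states the same inequality directly in contrapositive form.
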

\begin{proof}
According to \eqref{eq:BUQO_Fixed}, for every $\xb \in \tca$, we have $\| \xb^\ddagger - \Gc (\xb^\ddagger)\|_2 \le \| \xb - \Gc (\xb)\|_2 $. So, if $\| \xb^\ddagger - \Gc (\xb^\ddagger)\|_2 > 0$, then there is no image in $\tca$ that is a fixed point of $\Gc$. Hence, by definition of $\Sc$, we can deduce that $\tca \cap \Sc = \emp$.
\end{proof}

\subsubsection*{Remarks}
Unlike the original formulation described in Section~\ref{Ssec:BUQO-var-form}, the proposed formulation does not look for the distance between set $\tca$ and set $\Sc$. Instead, the solution to~\eqref{eq:BUQO_Fixed} provides an upper bound on this distance, and a lower bound on the distance between the MAP estimate $\xb^\dagger$ and its structure-free version, i.e., 
\begin{equation*}
\| \xb^\dagger - \Gc(\xb^\dagger) \|_2 \ge \| \xb^\ddagger - \Gc(\xb^\ddagger) \|_2 \ge \text{dist}(\Sc, \tca).    
\end{equation*}
The proposed definition of $\Sc$ could also be used in a similar way as in problem~\eqref{eq:min}, e.g. by solving
\begin{equation*}
    \minimize{\xb_\Sc \in C, \, \xb_\tca \in \tca} 
\dfrac{\gamma}{2}\| \xb_\tca -  \xb_\Sc\|_2^2
+ \dfrac{\zeta}{2} \| \xb_\Sc - \Gc(\xb_\Sc) \|_2^2.
\end{equation*}
Instead, problem~\eqref{eq:BUQO_Fixed} directly determines whether the intersection $\tca \cap \Sc$ is empty, which is sufficient for the hypothesis test, while reducing the dimension by looking for a single image.

In addition, the proposed formulation of structure-free images, is advantageous over the hand-crafted definition described in Appendix~\ref{appendix:struct}, since it is parameter-free. In addition, when the model has been trained for a particular application and structure type (e.g., localized structures), operator $\Gc$ can be used for any mask without needing to be adapted. Further, as we will show in the simulation section, the results from the proposed approach outperform results of BUQO in terms of visual inspection.

To evaluate our hypothesis test using Corollary~\ref{cor:2}, it suffices to determine whether or not the sets $\Sc$ and $\tca$ are disjoint. Similar to described in Section~\ref{Ssec:BUQO-var-form}, instead of determining if $\| \xb^\ddagger - \Gc(\xb^\ddagger) \| >0$, we introduce a threshold value below which we assume $\| \xb^\ddagger - \Gc (\xb^\ddagger)\| =0$. 
This value mainly accounts for numerical approximation limitations in algorithmic implementations. 
Hence, although the definition of $\Sc$ is parameter-free, the hypothesis test is dependent on this threshold value. 

\begin{definition}
Let \(\xb^\dagger\) be a MAP estimate,  $\xb^\ddagger \in \tca $ be a solution to~\eqref{eq:BUQO_Fixed} and let $\Gc \colon \eR^N \times \{0,1\}^N \rightarrow \eR^N$ be an inpainting operator for $\Mb$ as per Definition~\ref{def:inp-op}. The normalized intensity of the structure is defined as
\begin{equation*}
    \rho_\alpha = \frac{\| \xb^\ddagger - \Gc(\xb^\ddagger)\|}{\|\xb^\dagger - \Gc(\xb^\dagger)\|} \in [0,1].
\end{equation*}
\end{definition}
The quantity $\rho_\alpha$ has the same interpretation as $\widetilde{\rho}_\alpha$ in Definition~\ref{def:rho_tilde}, i.e., it acts as an upper bound on the percentage of the energy of the structure that is supported by the data. The greater the value of $\rho_\alpha$, the more confidence we have in rejecting the null hypothesis. Hence $\rho_\alpha$ acts similarly to a p-value in classical hypothesis testing. To decide whether the null hypothesis $H_0$ can be rejected or not, we consider a threshold value $ 0 < \tau \ll 1 $. If $\rho_\alpha > \tau$ then $\Sc \cap \tca = \emp$ and the null hypothesis $H_0$ can be rejected with confidence $\alpha$. Otherwise, $\rho_\alpha \leq \tau$ and we conclude that there is a structure-free image that is concurrent with the observed data. Hence the data is inconclusive and the null hypothesis cannot be rejected. 

\begin{remark} Note that problem~\eqref{eq:BUQO_Fixed} is usually not convex. In this case, the output of the algorithm would therefore be a critical point $x^{\infty}$ rather than the solution $x^\ddagger$ to the minimization problem. Hence, we would have $\| x^\ddagger - \mathcal{G}(x^\ddagger) \| \le \| x^\infty - \mathcal{G}(x^\infty) \| $. In this case, we make the following comments. 
\begin{enumerate}
        \item 
        The percentage of the structure that will be confirmed by the data looking at $x^\infty$ will be overestimated. In fact, we have
        \begin{equation*}
            \rho_\alpha = \frac{\| x^\ddagger - \mathcal{G}(x^\ddagger) \|}{\| x^\dagger - \mathcal{G}(x^\dagger) \|}
            \le 
            \rho_\alpha^\infty = \frac{\| x^\infty - \mathcal{G}(x^\infty) \|}{\| x^\dagger - \mathcal{G}(x^\dagger) \|}.
        \end{equation*}
        \item 
        The fact of obtaining a critical point instead of a minimizer is conservative regarding the uncertainty with respect to the structure of interest.
        Let $\tau>0$ be the threshold to decide whether $H_0$ should be rejected or not.
        \begin{itemize}
            \item 
            If $\rho_\alpha \le \rho_\alpha^\infty \le \tau$, then $H_0$ is rejected, confirming the structure is in the true image. This conclusion is the same as the one that should be drawn if $x^\ddagger$ was available.
            \item 
            If $ \rho_\alpha \le \tau \le \rho_\alpha^\infty $, then $H_0$ cannot be rejected based on $x^\infty$, while it would be rejected if $x^\ddagger$ was available. In this case, it means that $x^\infty$ does not give enough evidence against $H_0$ and that we are still unsure about the structure.
        \end{itemize}
    \end{enumerate}
\end{remark}




\subsection{Proposed PnP-BUQO}\label{sect:algos}

In this section, we present the proposed PnP-BUQO framework for performing the hypothesis test with the data-driven structure definition, as described in Section~\ref{Ssec:BUQO-DL-form}. Before introducing algorithmic details of our proposed method, we describe the MAP problem of interest.

\subsubsection{MAP estimation}\label{subsec:map_estimate}
For the sake of simplicity, in this work we focus on the same setting as in \cite{RPW18}, assuming that the additive noise in~\eqref{eq:model} has a bounded energy; that is, there exists $\varepsilon>0$ such that $\|\wb\|_2 = \| \Phib \xb - \yb \|_2 \leq \varepsilon$. 
Equivalently, this constraint can be reformulated as $\Phib \xb \in \Bc_2(\yb, \varepsilon)$, where $\Bc_2(\yb, \varepsilon)$ denotes the $\ell_2$-ball, centred in $\yb$ with radius $\varepsilon$.
Considering problem~\eqref{eq:MAP}, we thus set our data fidelity term as 
\begin{equation}    \label{eq:fy-def}
    f_\yb(\Phib \xb) = \iota_{\Bc_2(\yb, \varepsilon)}(\Phib \xb).
\end{equation}
Furthermore, we choose a hybrid regularization term, imposing a constraint on the dynamic range of the image of interest, and promoting sparsity in a transformed domain. The resulting function is of the form of
\begin{equation}    \label{eq:g-def}
    g(\xb) = \iota_{[0,1]^N}(\xb) + \lambda \|\Psib \xb \|_1,    
\end{equation}
where $\lambda>0$ is a regularization parameter, and $\Psib \colon \eR^N \to \eR^P$ models a linear sparsifying operator. Common choices of $\Psib$ include wavelet transforms~\cite{Mallat, Daubechies1988, Daubechies1992, daubechies1992ten}, or gradient differences (e.g. vertical/horizontal differences, resulting in the anisotropic TV norm \cite{ROF1992}).

According to the definition of $\tca$ in Eq.~\eqref{def:Calpha}, when functions $f_\yb$ and $g$ are specified by \eqref{eq:fy-def} and \eqref{eq:g-def}, respectively, we have
\begin{multline}    \label{eq:Calpha-balls}
    \tca = \Big\{\xb \in [0,1]^N \, \mid \, \Phib \xb \in \Bc_2(\yb, \varepsilon) \\[-0.2cm]
    \text{ and } \Psib \xb \in \Bc_1(\zerob, \widetilde{\eta}_\alpha/\lambda) \Big\},
\end{multline}
where $\Bc_1(\zerob, \widetilde{\eta}_\alpha/\lambda)$ denotes the $\ell_1$-ball, centered in $\zerob$ with radius $\widetilde{\eta}_\alpha/\lambda$. 

We proceed to solve the resulting particular instance of problem~\eqref{eq:MAP} with the primal-dual Condat-V\~u method \cite{Condat13, Vu13} described in Algorithm~\ref{alg:PD_MAP} in Appendix~\ref{appendix:map}. In the remainder, we denote by $\xb^\dagger$ the output of this method. 

For the sake of completeness, we also provide in Appendix~\ref{appendix:buqo} the implementation of the classical BUQO, when $\Sc = \Sc_1 \cap \Sc_2 \cap \Sc_3$ as explained in Section~\ref{Ssec:BUQO-var-form}.



\begin{figure*}[t]
\centering
\begin{overpic}[trim = {0.35in, 0.4in, 0.5in, 0.4in}, clip, width = 0.75\textwidth]{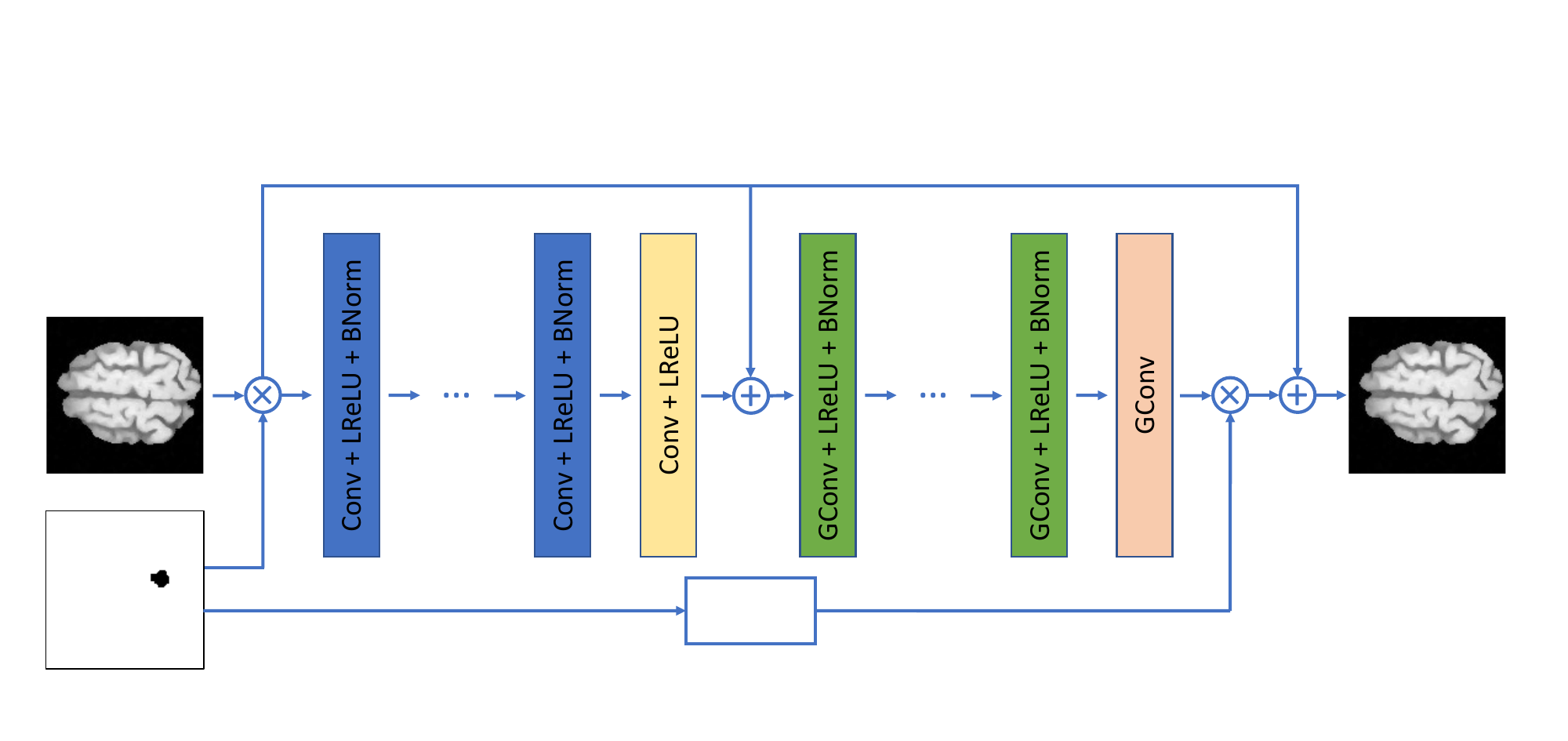} \put (1,3) {\small $\Mb$}
\put (44.4, 4.7) {\small$1 - \Mb$}
\put (1, 16) {\small \color{white} $\xb$}
\put(89.5, 16) {\small \color{white}  $\Gc(\xb)$}
\end{overpic}


\caption{\small
Modified DnCNN architecture, where Conv is a $2d$ convolutional layer, LReLU is a leaky rectified linear unity, BNorm is a batch normalization layer and GConv is a $2d$ gated convolutional layer.}
\label{fig:myGatedDnCNN}
\end{figure*}

\subsubsection{PnP primal-dual algorithm for BUQO}

In this section, we detail the algorithmic realization of our PnP version of BUQO approach, using a primal-dual forward-backward scheme to solve problem~\eqref{eq:BUQO_Fixed}.
As suggested in Section~\ref{Ssec:BUQO-DL-form}, CNNs have shown outstanding performance in inpainting tasks (see, e.g., \cite{pconv, Yu19, ShiftNet, palette2022, wang2023ddnm}). We hence propose to use a CNN for the operator $\Gc$ appearing in the definition \eqref{eq:Sc} of set $\Sc$. Further details on the choice of $\Gc$ are provided in Section~\ref{Sec:inp-cnn}.

Let $\Gc$ be a Lipschitz differentiable NN trained to solve an inpainting task, and let
\begin{equation*}
    (\forall \xb \in \RR^N)\quad
    h(\xb) = \dfrac{\zeta}{2} \|\xb - \Gc(\xb)\|_2^2.
\end{equation*}
We can solve the minimization problem \eqref{eq:BUQO_Fixed} by computing gradient steps on the Lipschitz-differentiable function $h$, and projections on sets $\tca$ and $C$. 
In particular, one can evaluate $\nabla h$ by automatic differentiation. In the remainder, we denote by $\beta>0$ the Lipschitz constant of $\nabla h$.

Due to the non-linearity of operator $\Gc$, function $h$ is not convex. In this context, problem~\eqref{eq:BUQO_Fixed} can be solved with forward-backward methods \cite{Chouzenoux2014, Attouch11} that would provide a critical point of the objective function. However, since the set $\tca$ defined in \eqref{eq:Calpha-balls} corresponds to the intersection of three convex sets, the projection onto this set would require sub-iterations. Instead, we propose to use a primal-dual algorithm. Although this algorithm does not have convergence guarantees for non-convex objectives, it has been used in the literature in this context \cite{ProceedingsCondat13}. The resulting iterations are provided in Algorithm~\ref{alg:PD_Fix}.
Since the definition of structure-free images is now entirely characterized by the function $h$, Algorithm~\ref{alg:PD_Fix} involves fewer steps than the original BUQO Algorithm (see Appendix~\ref{appendix:buqo}).

Additionally, in contrast to the original BUQO Algorithm, structure-removal updates in Algorithm~\ref{alg:PD_Fix} occur via the gradient step, and is updated by $\nabla h$ in modifying the variable $\xb$. Since the effective domain of $\Gc$ is the collection of pixels associated to $\Mb^c$, the gradient term corresponding to inpainting would update precisely those pixels that are not part of the structure of interest. 

The output of Algorithm~\ref{alg:PD_Fix} is $\xb^\ddagger\in \tca$, which aims to satisfy $ \xb^\ddagger = \Gc(\xb^\ddagger) $. Note that $\xb^\ddagger$ is analogous to $\xb^\ddagger_{\tca}$ given by the original BUQO in Corollary~\ref{cor:1}, while $\Gc(\xb^\ddagger)$ is analogous to $\xb^\ddagger_\Sc$. 


Although convergence guarantees for Algorithm~\ref{alg:PD_Fix} only hold for convex functions, we still choose the step-sizes $(\mu_1, \mu_2, \sigma) \in [0,+\infty)^3$ as per \cite[Thm. 5.1]{Condat13}, i.e., to satisfy
\begin{equation}\label{eq:stepsizes} 
\frac 1 \sigma \geq \dfrac{\beta }{2} +  \mu_1 \| \Psib \|_S^2 + \mu_2  \| \Phib \|_S^2.
\end{equation}

Note that, due to the fact that $\Gc$ is chosen to be a NN, the proposed Algorithm~\ref{alg:PD_Fix} is reminiscent of PnP methods. In contrast to standard PnP methods using NN as a regularizer, we propose to use a NN for an inpainting procedure. While both PnP algorithms and the use of NNs for image inpainting are ubiquitous, to the best of our knowledge, there are no PnP algorithms in the literature incorporating inpainting NNs. Thus, Algorithm~\ref{alg:PD_Fix} is novel in this regard.

\begin{remark}
Note that if $\Gc$ is chosen such that $\nabla h$ is ensured to be monotone, then iterations from Algorithm~\ref{alg:PD_Fix} converge to a solution to problem~\eqref{eq:BUQO_Fixed} \cite{Condat13, Vu13}. 
Such a condition could possibly satisfied in practice if using training framework similar to the one proposed in~\cite{belkouchi2025learning}.
\end{remark}

\begin{algorithm}[t]
\caption{PnP-BUQO algorithm}\label{alg:PD_Fix}
\begin{algorithmic}
\STATE{\textbf{Initialization:} 
Let $\xb^{(0)} \in ([0,\infty[^{N})^2$ and 
$(\vb_1^{(0)}, \vb_2^{(0)}) \in \eR^P \times \eC^M$. 
Let $(\mu_{1}, \mu_{2}, \sigma) \in [0,\infty)^3$ be such that \eqref{eq:stepsizes} is satisfied. 
}
\STATE{\textbf{Iterations:}}
 \FOR{$k=0, 1, \ldots$}
  \STATE{$\widetilde{\vb}_1^{(k)} = \vb_1^{(k)} + \mu_1 \Psib \xb^{(k)}$}
  \STATE{${\vb_1}^{(k+1)} = \widetilde{\vb}_1^{(k)} - \mu_1\Pi_{\Bc_1(0, \widetilde{\eta}_\alpha/\lambda)} \Big( \mu_1^{-1}\widetilde{\vb}_1^{(k)} \Big) $}
  \STATE{$\widetilde{\vb}_2^{(k)} = \vb_2^{(k)} + \mu_2 \Phib \xb^{(k)}$}
  \STATE{$\vb_2^{(k+1)} = \widetilde{\vb}_2^{(k)} - \mu_2 \Pi_{\Bc_2(\yb, \varepsilon)} \Big( \mu_2^{-1}\widetilde{\vb}_2^{(k)} \Big) $}
  \STATE{$ \!\!\! \begin{array}{ll}
                \widetilde\xb^{(k)} \!\!\!\!  &= \Pi_{C}\bigg(\xb^{(k)}- \sigma \nabla h(\xb^{(k)})  \\
                                    &   \quad \quad \quad \quad \quad \quad -\sigma \Psib^\dagger \vb_1^{(k+1)} -\sigma  \Phib^\dagger \vb_2^{(k+1)} \bigg)
            \end{array}$}
  \STATE{$\xb^{(k+1)} = 2\widetilde\xb^{(k)} - \xb^{(k)}$}
 \ENDFOR
\end{algorithmic}
\end{algorithm}


\section{Numerical simulation setting}\label{section:simulations}

All our experiments are coded with Pytorch, and performed on an NVIDIA GeForce RTX 2080 Ti.
To evaluate the proposed uncertainty quantification framework, we use \eqref{eq:model} as a simplified model of Fourier Magnetic Resonance (MR) imaging as well as of the Computed Tomography (CT) imaging. In particular, for the MR imaging tasks we consider the pre-processed and annotated real-valued images from the BRATS21 dataset \cite{BRATS1, BRATS3, BRATS2}, cropped to dimension $N = 128 \times 128$. For the CT imaging tasks, we use the standard Shepp-Logan phantom, rescaled to dimensions \(120 \times 120\), as well as realistic CT scans from the Cancer Genome Atlas Lung Adenocarcinoma (TCGA-LUAD) data collection \cite{albertina2016cancer}, of dimension $N = 512 \times 512$.


\subsection{An Inpainting CNN}
\label{Sec:inp-cnn}

\paragraph{Proposed network architecture}

A wealth of architectures and training regimes are available in the literature to perform image inpainting~\cite{pconv, ShiftNet, Yu19}. In this work, due to the form of Algorithm~\ref{alg:PD_Fix} that requires to differentiate $h$ (and hence the NN $\Gc$) at each iteration, we propose to use a CNN with fairly simple architecture. In particular, we choose $\Gc$ to be a modified DnCNN, which contains both convolutions and gated convolutions~\cite{Yu19, ZZCMZ17}. 

Gated convolutions are modified convolutional layers allowing for inpainting holes of any shape and without specifying those pixels to be inpainted \cite{Yu19}. While a standard convolution is computed by weighting all pixels in the receptive field equally, a gated convolution is a convolution that is modulated by non-uniform pixel weightings, via a learnable gating mechanism. Specifically, for each gated convolution, depending on both the spatial location and feature channel of the input, a gating map with values in $[0,1]$ is learned. This gating map is used to adaptively scale the convolution outputs, reflecting how much each pixel should be trusted.


The proposed modified DnCNN network $\Gc$ takes as an input both the mask $\Mb$ and an image $\xb$ to be inpainted, in the area identified by the mask. In particular, we want $\Gc$ to satisfy Definition~\ref{def:inp-op}, so the output image $ \Gc(\xb)$ corresponds to an inpainted version of $\Mb \xb$ and should satisfy $\Mb^c \Gc (\xb) = \Mb^c \xb$.
The proposed architecture, depicted in Figure~\ref{fig:myGatedDnCNN}, is split into two blocks: a first denoising block to improve robustness, and a second block aiming to perform the inpainting operation. The initial denoising block contains $5$ standard convolutional layers, each with $128 \times 5\times 5$ filters. The second inpainting block contains $5$ gated convolutional layers, with $128 \times 5\times 5$ filters.

\paragraph{Training procedure}

For training the network, we adopt a similar approach to~\cite{pconv}. The network $\Gc$ is trained by minimizing a composite loss function comprising of five terms: an $\ell^2$-loss, an $\ell^1$ loss restricted to the inpainted region, a TV term restricted to the mask boundary and a perceptual and style loss using the VGG16 feature extractor~\cite{VGG16}. 
Furthermore, we use activation dropouts on the first four layers. We use a dataset containing $90,885$ images and masks from the BRATS21 dataset \cite{BRATS1, BRATS3, BRATS2}. Images are preprocessed with additive random noise and random rotations. Masks are preprocessed from the label data, and are randomly dilated. Multiple masks are multiplied to form a new single mask. The processed images are randomly cropped to $96 \times 96$ and masked using a randomly chosen mask from our mask dataset. The model is trained with the ADAM optimizer, for 32 epochs with a batchsize of 24. The learning rate is $10^{-3}$, dropping to $2\times 10^{-4}$ after 16 epochs. Training was performed with PyTorch, on an NVIDIA GeForce RTX 2080 Ti and lasted 16~hours.

\subsection{Algorithm parameters}\label{subsect:params}

To implement Algorithm~\ref{alg:PD_Fix}, the stepsizes must satisfy \eqref{eq:stepsizes}, that necessitates to compute the Lipschitz constant $\beta$ of $\nabla h$. To do so, we approximate it as $\beta \approx \max_{1 \le i \le I} \| \nabla^2 h(\Gc(\xb^\dagger) + \omegab_i)\|_S$, where $(\omegab_i)_{1 \le i \le I}$ are realizations of random Gaussian variables with standard deviation $\varsigma = 0.01$. In practice, we chose $I=4$, and we compute $\nabla^2 h(\xb^\dagger)$ by automatic differentiation.
We then compute $\|\nabla^2 h(\Gc(\xb^\dagger) + \omegab_i)\|_S^2$ by power iterations~\cite{PRTW21}. 

Note that, according to Definition~\ref{def:inp-op}, for every $\xb \in \RR^N$, $\Gc(\Gc(\xb)) = \Gc(\xb)$. That is, $h(\Gc(\xb)) = 0$, and hence $\nabla h(\Gc(\xb)) = 0$. 
Since we propose to define $\Sc$ using a CNN, we may have poor control over $\nabla h$ in general. However, since we aim to find an element of $\Sc$, if we initialize Algorithm~\ref{alg:PD_Fix} with $\xb^{(0)} = \Gc(\xb^\dagger) \in \Sc$, 
and choose a sufficiently small primal stepsize $\sigma$, we aim to keep our primal steps small, so that each $\xb^{(k)}$ is sufficiently similar to the training dataset. 
In fact, preliminary simulations suggested that this local approximation of $\beta$ is good enough to ensure stability of the proposed PnP-BUQO iterations.
The behaviour of $(h(\xb^{(k)}))_k$ and $\| \nabla h(\xb^{(k)}) \|_S$ will be investigated in Section~\ref{Ssec:res:behav}. We consider that the algorithm has converged when
\begin{equation}   \label{eq:stop-cond}
 \xb^{(k)} \in \tca 
 \text{ and }
\| \xb^{(k+1)} - \xb^{(k)} \| \le 10^{-3} \| \xb^{(k)} \|,
\end{equation}
where $(\xb^{(k)})_{k \in \NN}$ are the iterates generated by Algorithm~\ref{alg:PD_Fix}.

For the feasibility constraint $C$ representing the dynamic of the image, we choose $[0,1]^N$ due to our choice of application.

For the hypothesis test evaluation in our simulations, we choose $\alpha = 0.01$ and $ \tau = 0.02$.

\begin{figure}[ht]
    \centering
    \includegraphics[width = 0.35\textwidth]{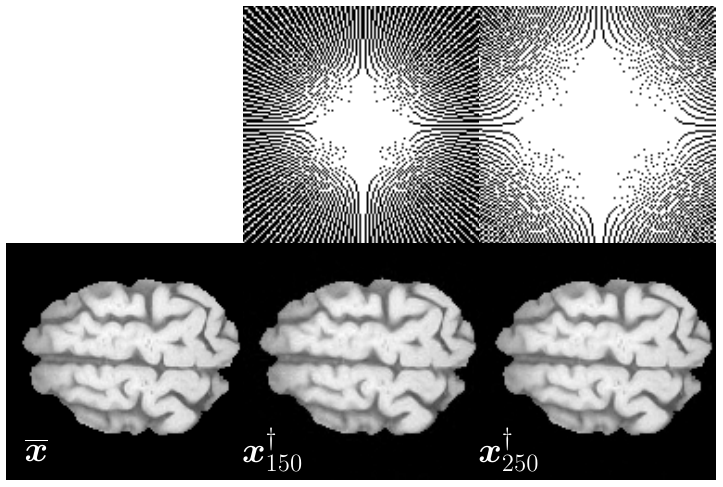}
    \caption{\small
    \textbf{Discrete Fourier sampling.} (top) Examples of masks $\Mb$ with $M_a = 150$ and $M_a = 250$ angles (i.e., $M/N = 0.52$ and $M/N = 0.76$), respectively. (bottom) Ground truth ($N = 128 \times 128$), and MAP estimates for  $\mathrm{iSNR} =30$~dB.}
    \label{fig:masks_map}
\end{figure}

\subsection{Simulated data models}\label{subsect:sim}

We consider three measurement models for operator $\Phib$ in \eqref{eq:model}, described below. The associated MAP estimates $\xb^\dagger$, which is a solution to~\eqref{eq:MAP}, are obtained using Algorithm~\ref{alg:PD_MAP}, where the sparsifying operator $\Psib$ is chosen to be the Daubechies wavelet transform (db8) with decomposition level $3$.

\paragraph{Discrete Fourier sampling} 

We simulate measurement data by considering the linear sensing model given in~\eqref{eq:model}. The forward operator, denoted by $\Phib_{M_a}$, corresponds to a subsampling 2D Fourier measurement operator. The subsampling operator is a binary mask whose non-zero entries correspond to those pixels lying on a given number of lines, each of which contain the origin and are equally rotationally-spaced. We consider subsampling according to numbers of acquisition angles $M_a \in \{150, 200, 250, 300, 350\}$, corresponding to ratios $M/N \in \{0.52, 0.65, 0.76, 0.84, 0.90\}$. Examples are provided in Figure~\ref{fig:masks_map} for $M_a = 150$ and $M_a=250$. 
We generate the noise $\wb$ for the model \eqref{eq:model} as realizations of iid complex Gaussian noise with standard deviation 
$\delta = M^{-1}\|\Phib_{M_a} \overline{\xb}\|_2 10^{-\mathrm{iSNR} / 20}>0$, where  $\overline{\xb}$ is the ground truth image,
for input signal-to-noise ratio (iSNR) values in $ \{20, 25, 30, 35\}$ dB. 
Example MAP estimates are given in Figure~\ref{fig:masks_map}.



\begin{figure}[ht]
    \centering
    \includegraphics[width = 0.35\textwidth]{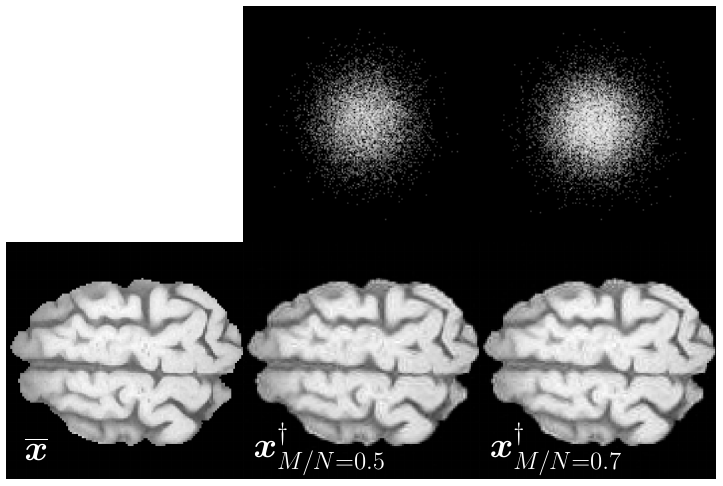}
    \caption{\small
    \textbf{Non-uniform Fourier with random sampling.} (top) Examples showing in Fourier domain the frequencies selected in obtaining the measurements, with $M/N = 0.5$ and $M/N = 0.7$, respectively. (bottom) Ground truth ($N = 128 \times 128$), and the two respective MAP estimates.}
    \label{fig:nufft_mri_masks_map}
\end{figure}

\paragraph{Non-uniform Fourier with random sampling}
For the second set of simulations, we continue using images from the same BRATS21 dataset. Here we consider a non-uniform Fourier sampling operator in the forward model, implemented using the DeepInverse library \cite{tachella2023deepinverse}. The sampling points in the Fourier domain are generated randomly from a Gaussian distribution with zero mean and variance equal to \(0.25\) of the maximum frequency. We consider sampling ratio values \(M/N \in \{0.3, 0.5, 0.7\}\). The noise \(\wb\) is generated as realizations of complex i.i.d Gaussian variables, with zero mean and variance \(\delta\) chosen to achieve iSNR values of \(\{20, 30, 40\}\) dB. Figure~\ref{fig:nufft_mri_masks_map} shows examples in the Fourier space where the frequencies are selected in obtaining the measurement, using random sampling with \(M/N = 0.5\) and \(M/N = 0.7\), alongside the two respective MAP estimates. 

\paragraph{Radon transform with limited angles}

Finally, we consider computed tomography (CT) imaging in our simulations where the measurement operator \(\Phib\) consists of a discretized Radon transform. The forward operator corresponds to a parallel-beam tomographic projection, with measurements acquired at distinct projection angles. To generate our CT simulation data, we use both the standard Shepp–Logan phantom and imaging data from the TCGA-LUAD collection \cite{albertina2016cancer}.

We select projection angle configurations consisting of \(\{30, 90, 120\}\) views, uniformly spaced over the full $180^\circ$, such that each finer sampling includes the previous coarser ones. We generate noise \(\wb\) as realizations of i.i.d Gaussian random variables, characterized by a standard deviation \(\delta\). The noise variance \(\delta\) is chosen to achieve iSNR values of \(\{30, 35, 40\}\) dB. Figure~\ref{fig:ct_sinogram_maps} shows examples of sinograms for projection angle of 90 and 120 views. 

\begin{figure}[t]
    \centering
    \includegraphics[width = 0.35\textwidth]{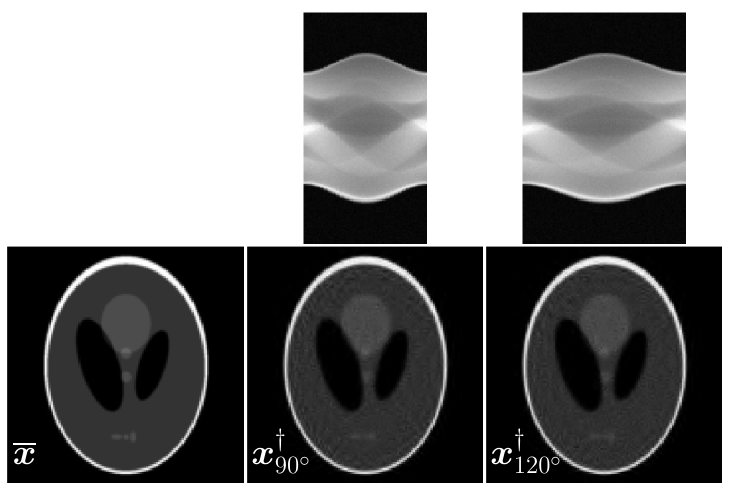}
    \caption{\small
    \textbf{Radon transform with limited angles (phantom)} (top) Examples showing sinograms with projection angle of 90 and 120 views, respectively. (bottom) Ground truth ($N = 120 \times 120$), and the two respective MAP estimates.}
    \label{fig:ct_sinogram_maps}
\end{figure}

Thereafter, structures of interest are identified manually and expressed as a binary mask $\Mb \in \{0,1\}^N$. 
The MAP estimate $\xb^\dagger$ and the mask $\Mb$ are then passed as inputs to Algorithm~\ref{alg:PD_Fix} to quantify uncertainty on structures of interest, as shown in Figure~\ref{fig:myGatedDnCNN}.

\begin{figure*}[t]
    \centering
    \begin{subfigure}[b]{0.7\textwidth}
    \includegraphics[width=1.05\linewidth]{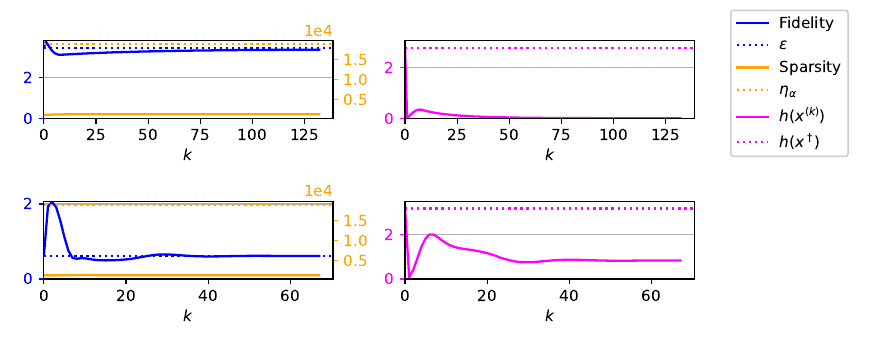}
    \vspace{-8mm}
    \caption{}\label{fig:data-lips-a}
    \end{subfigure}\hspace{-1.5mm}
    \begin{subfigure}[b]{0.7\textwidth}
    \includegraphics[width=1.05\linewidth]{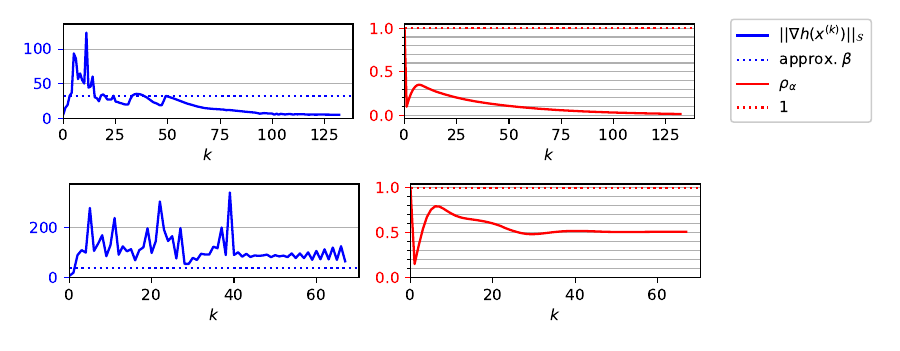}
    \vspace{-8mm}
    \caption{}\label{fig:data-lips-b}
    \end{subfigure}
    \vspace{-2mm}
    \caption{\small
    \textbf{Discrete Fourier sampling.} (a) Plots illustrate the data fidelity $f_\yb(\Phib \xb^{(k)})$, regularization $g(\xb^{(k)})$ and objective $h(\xb^{(k)})$. (b)  Plots illustrate $\rho_\alpha(\xb^{(k)}) = \sqrt{h(\xb^{(k)})/h(\xb^\dagger)}$ and the norm of the gradient $\|\nabla h(\xb^{(k)})\|_S$. Top (resp. bottom) plots correspond to $(M_\alpha, \mathrm{iSNR}) = (350, 20)$ (resp. $(350,35)$).}
    \label{fig:data-lips}
\end{figure*}

\section{Simulation results} \label{section:simulation-results}
\subsection{Algorithm behaviour analysis}
\label{Ssec:res:behav}

We first study the behaviour of the proposed PnP-BUQO, verifying its robustness across iterations, and comparing outputs with the original BUQO as well as its complexity. For the sake of simplicity, in this section we focus on the simple  case of discrete Fourier sampling described in Section~\ref{subsect:sim}.

\paragraph{Algorithm Robustness}

Figure~\ref{fig:data-lips} shows different curves to illustrate the convergence behavior of the proposed PnP-BUQO. 
Top (resp. bottom) plots correspond to the case $(M_a, \mathrm{iSNR})=(350, 20)$ (resp. $(M_a, \mathrm{iSNR})=(350, 35)$).
In Figure~\ref{fig:data-lips-a}, we show the convergence behavior of PnP-BUQO for constraints associated with $\tca$ (left), and for function $h$ (right). 
In Figure~\ref{fig:data-lips-b}, we show the evolution of $\| \nabla h(\xb^{(k)}) \|_S$ with iterations $k$ compared with the approximated estimation of $\beta$ used to choose the stepsizes in Algorithm~\ref{alg:PD_Fix} (left), and the evolution of the quantity $\rho_\alpha(\xb^{(k)}) = \sqrt{h(\xb^{(k)} / h(\xb^\dagger)}$ (right).


\begin{figure}[t]
\hspace{0.1in}\raisebox{0.15in}{\includegraphics[width = 0.13\textwidth, trim = 0.35in 0 0.35in 0, clip]{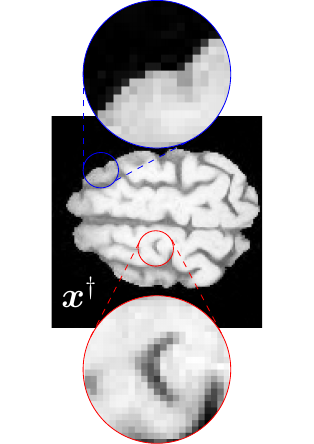}
}\includegraphics[width = 0.35\textwidth]{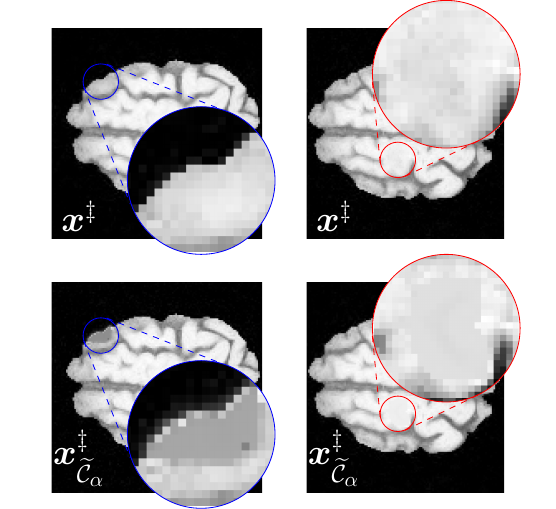}
    \caption{\small
    \textbf{Discrete Fourier sampling.} Comparison of BUQO with PnP-BUQO. 
    (left) MAP 
 $\xb^\dagger$ ($M_a = 200, \mathrm{iSNR} = 20$~dB). 
    (middle \& right) Results for two different structures, obtained with (top) PnP-BUQO $\xb^\ddagger$, ($\rho_\alpha = 0.018, \rho_\alpha = 0.018$) and (bottom) BUQO $\xb_\tca^\ddagger$ ($\rho_\alpha = 0.018, \rho_\alpha = 0.017$). In all instances, $\rho_\alpha < \tau$, so the data are inconclusive to reject $H_0$.}
    \label{fig:vanilla}
\end{figure}

\begin{figure*}[t]
\centering
    \includegraphics[width =0.95\textwidth]{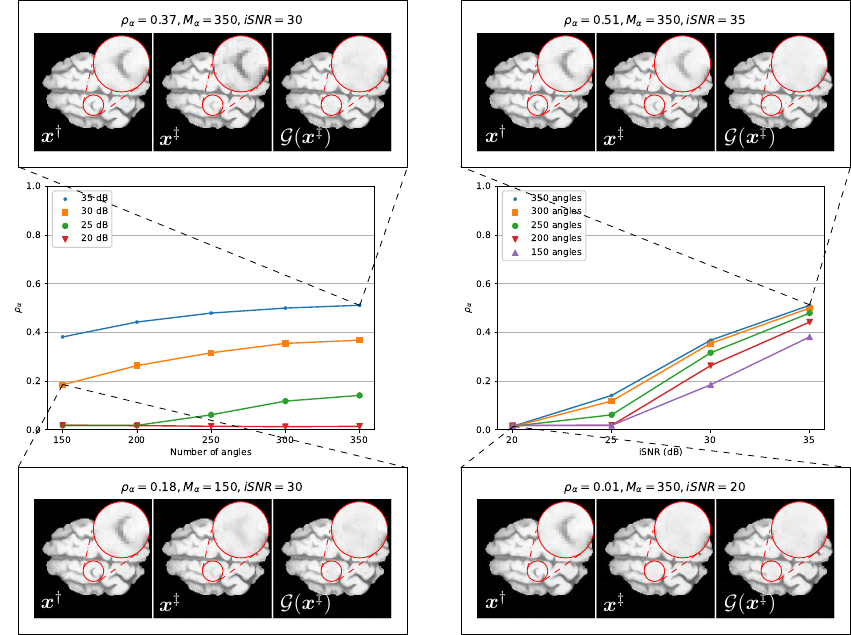}
    \caption{\small
    \textbf{Discrete Fourier sampling.} Structure confidence $\rho_\alpha$ (for a particular structure) with respect to the number of measurement angles, $M_a$ (left) and the iSNR values, i.e. noise level (right).}
    \label{fig:ra}
\end{figure*}

\paragraph{Comparison with classical BUQO}
We compare the qualitative fidelity of the structure-free image $\Gc(\xb^\ddagger)$ obtained with the proposed PnP-BUQO, with the results obtained with original BUQO \cite{RPW19}, for two structures, shown in Figure~\ref{fig:vanilla}. 
For both examples, we can observe that the proposed PnP-BUQO gives a more natural image than original BUQO, in particular when for edged structure regions. 
In the middle example we see that the original BUQO provides an image in $\tca$ that looks unnatural (darker patch in the inpainted region). Although this is less the case for the right example, BUQO still gives a patchy texture in the inpainted region. The result obtained by the original BUQO method appears cartoonish flat-textured due to the bounded energy term which defines the space of structure-free images. In contrast, PnP-BUQO produces images with a ``natural'' appearance.


\paragraph{Algorithm Complexity} 
A quantitative comparison of the time complexity is presented in Table~\ref{table:times}, for a particular structure of size $N_M = 170$ pixels. In our experiments, due to differentiating network $\Gc$ at each iteration, the proposed PnP-BUQO was slower than the original BUQO algorithm, when run on a unique structure. For this example, in total, PnP-BUQO was 4 times slower than BUQO. Note that similar observations were made in~\cite{Hurault2021} for a PnP algorithm where the involved neural network also requires back-propagation at each iteration. 
\begin{table}[b]
    \caption{\small
    \textbf{Discrete Fourier sampling.} Average computation time ($\pm$ std) for a structure containing $170$ pixels.}
    \label{table:times}
    \centering
    \begin{tabular}{l c c}
         & BUQO & PnP-BUQO \\
         \hline 
         \hline
         $\beta$ approximation per sample (s)& N/A & $ 4.3 \pm 1.9 $\\
         \hline
         Linear operator (s)& $3.6 \pm 0.2$& N/A\\ 
         \hline 
          Per iteration (ms)& $5.6\pm0.7$ & $19.7 \pm 2.2$ \\
         \hline
          $\nabla h$ computation (ms)& N/A & $15.2 \pm 1.8$ \\ 
         \hline
         Algorithm runtime (s) & $1.3 \pm 1.6$ & $3.8 \pm 3.5$ \\
         \hline
        Final iteration &$233\pm 297$ & $192 \pm 176$ \\
         \hline
         Total runtime (s)& $4.9\pm  1.7$ & $21.2 \pm 8.4$\\
         \hline
    \end{tabular}
\end{table}

However, as already emphasized in previous sections, the original BUQO requires an \textit{ad hoc} inpainting operator to be built per mask. In addition, the linear operator associated with the mask is built using a greedy approach, and hence its computation does not scale well with the size of the structure. 
In contrast, the proposed PnP-BUQO uses the same network for any mask, that does not need to be retrained. 

Note that choosing stepsizes for Algorithm~\ref{alg:PD_Fix} requires an approximation of $\beta$, as discussed in Section~\ref{subsect:params}, that incurs significant computational cost. However, the proposed approximate computation of $\beta$ only depends on the mask $\Mb$, and hence does not vary much for masks of similar size. So the approximation of $\beta$ can be computed in advance and used to perform multiple tests.

Overall, despite a significant increase in computation time per iteration for the proposed PnP-BUQO, removing the need to compute an \textit{ad hoc} linear inpainting operator and ignoring the cost of approximating $\beta$ would result in similar overall runtimes for BUQO and PnP-BUQO on average.

\subsection{Application to image restoration}\label{section:application-results}

The proposed PnP-BUQO not only improves the original BUQO, but also shows good generalization capability. In this section, we present applications of the proposed PnP-BUQO, where three different sets of simulation measurements data as described in Section~\ref{subsect:sim} are being considered. 
We show that even though the inpainting CNN is trained on the BRATS21 MRI dataset, its inpainting performance on CT images is also notable. 

\subsubsection{Discrete Fourier sampling}

Figure~\ref{fig:ra} illustrates the relationship between $\mathrm{iSNR}$ (i.e. noise level), the number of acquisition angles $M_\alpha$ and statistic $\rho_\alpha$ for the proposed PnP-BUQO. We observe that confidence in the presence of a structure increases monotonically with respect to both the number of acquisition angles and $\mathrm{iSNR}$.

We also consider both real structures that indeed exist in the ground truth image (Figure~\ref{fig:edge}), as well as restoration artifacts in the MAP estimates (Figure~\ref{fig:artefact}).
In Figure~\ref{fig:edge} we show the MAP estimate $\xb^\dagger$ (corresponding to $M_a = 150$ and iSNR=$30$ dB) and its structure-free version $\Gc(\xb^\dagger)$, as well as the PnP-BUQO output $\xb^\ddagger$ and its structure-free version $\Gc(\xb^\ddagger)$. In this case, the edge structure does exist in the ground truth image, and PnP-BUQO rejects $H_0$ hence confirming the existence of the structure.
In Figure~\ref{fig:artefact} the MAP estimate contains a checkerboard feature, which is not present in the ground truth. For this feature, simulations with $M_\alpha \in \{150, 200, 250, 300, 350\}$ all resulted in $\rho_\alpha \leq \tau$, hence the null hypothesis cannot be rejected, ensuring that the structure is certainly an artifact.
Finally, in Figure~\ref{fig:vanilla} we show two examples on running PnP-BUQO on both a true structure and a reconstruction artefact, where the data are inconclusive to reject $H_0$ (for $M_a=200$ and $\mathrm{iSNR}=20$~dB).


\begin{figure}[t]
    \centering
    \includegraphics[width = 0.43\textwidth]{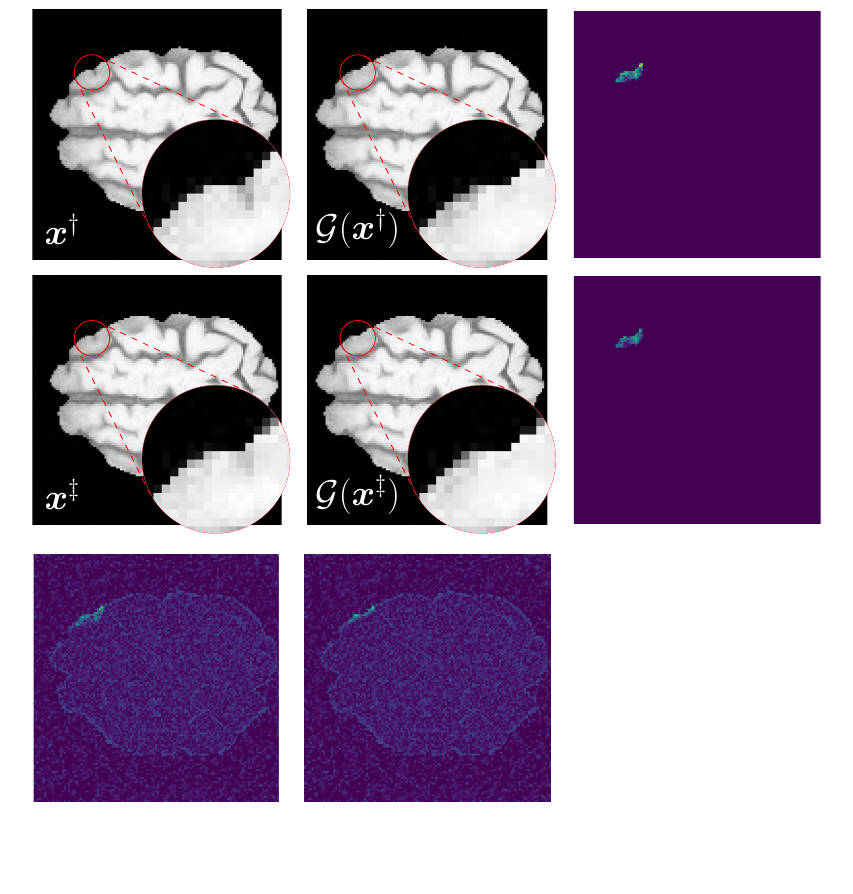}

    \vspace{-0.6cm}
    \caption{ \small
    \textbf{Dicrete Fourier sampling.} Example where the structure of interest is located near an edge (for $M_a = 150$, $\mathrm{iSNR} = 30$ dB, with $\rho_\alpha = 0.46$). 
    Grayscale images show (left to right, top to bottom) the MAP $\xb^\dagger$, the inpainted MAP $\Gc(\xb^\dagger)$, the output of the proposed PnP-BUQO $\xb^\ddagger$ and its inpainted version $\Gc(\xb^\ddagger)$.
    Color images show differences of images on horizontal and vertical axis, in log scale. 
    }
    \label{fig:edge}
\end{figure}

\begin{figure}[t]
  \centering
  \begin{subfigure}[b]{0.44\textwidth}
    \hspace*{0.85in}
    \includegraphics[%
      trim=0   0  1.325in 0,%
      clip,%
      width=0.33\textwidth%
    ]{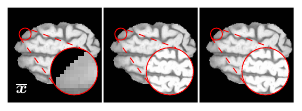}
    \vspace*{-0.075in}
  \end{subfigure}
  \hfill
    \begin{subfigure}[b]{0.44\textwidth}
    \includegraphics[trim=0 0 0 0.1in,%
      clip,%
      width=1\textwidth
    ]{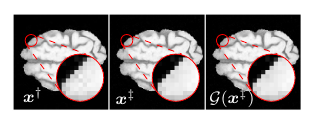}
  \end{subfigure}
  \caption{ \small
    \textbf{Discrete Fourier sampling.} (left) MAP (for $M_a=150$, iSNR=$25$ dB) with a checkerboard structure of interest. (center top) Ground truth. (center bottom) PnP-BUQO output $\xb^\ddagger$ and (right) inpainted output, with $\rho_\alpha = 0.018$.}
    \label{fig:artefact}
\end{figure}


\subsubsection{Non-uniform Fourier with random sampling}

We now focus on MRI measurements acquired through non-uniform Fourier with random sampling. 
In Figure~\ref{fig:nufft_mri_heatmap}, we give a heatmap summarizing the resulting $\rho_{\alpha}$ values obtained when questioning the true structure appearing in Figure~\ref{fig:nufft_mri_structure}, when varying measurement angles and noise conditions. As demonstrated by the heatmap, when the amount of uncertainty in the measurement data is high, either due to lower sampling ratios or high noise levels, the data support is resulting low.
To provide better quantitative measure of comparison of the two methods, in Figure~\ref{fig:nufft_mri_heatmap_ori_buqo}, we give a heatmap summarizing the $\widetilde{\rho}_{\alpha}$ values obtained with the original BUQO method, using same parameters as in Figure~\ref{fig:nufft_mri_heatmap}. 
The two heatmaps provided in Figures~\ref{fig:nufft_mri_heatmap_ori_buqo} and~\ref{fig:nufft_mri_heatmap} give very similar confidence levels for both methods. 
For the sake of completeness we also provide the associated PnP-BUQO and BUQO results in Figures~\ref{fig:nufft_mri_structure} and~\ref{fig:nufft_mri_structure_ori_buqo}, respectively, for the case $M/N=0.3$ and iSNR$=40$~dB. In this case $\rho_\alpha = 0.43$, hence rejecting $H_0$.

\begin{figure}[t]
\centering
    \includegraphics[width =0.35\textwidth]{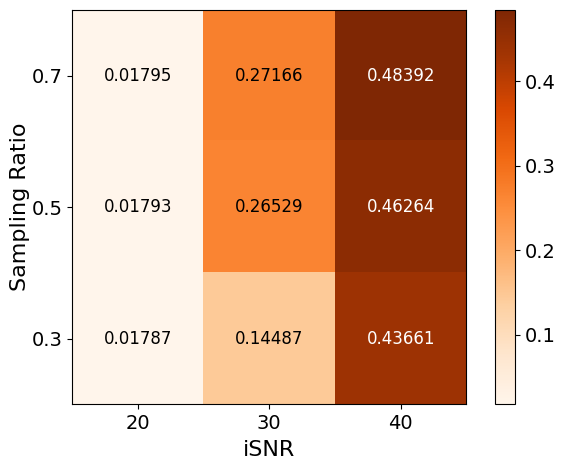}
    \caption{ \small 
    \textbf{Non-uniform Fourier with random sampling, with PnP-BUQO.} Heatmap showing the structure confidence $\rho_\alpha$ obtained with the proposed PnP-BUQO method (for the structure in Figure~\ref{fig:nufft_mri_structure}) with respect to sampling ratio (vertical) and iSNR values (horizontal).
    }
    \label{fig:nufft_mri_heatmap}
\end{figure}

\begin{figure}[t]
\centering
    \includegraphics[width =0.35\textwidth]{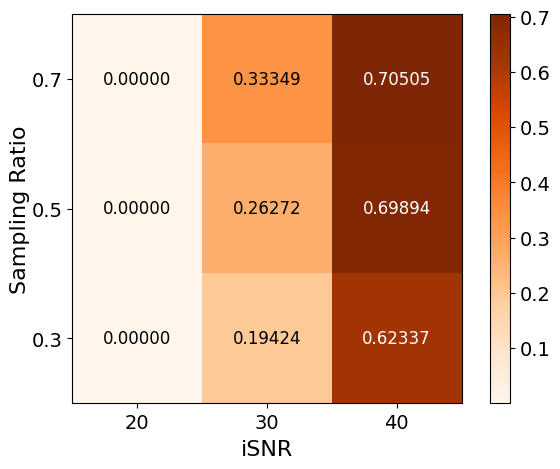}
    \caption{ \small 
    \textbf{Non-uniform Fourier with random sampling, with BUQO.} Heatmap showing the structure confidence $\widetilde{\rho}_\alpha$ obtained with the original BUQO method (for the structure in Figure~\ref{fig:nufft_mri_structure}) with respect to sampling ratio (vertical) and iSNR values (horizontal).
    }
    \label{fig:nufft_mri_heatmap_ori_buqo}
\end{figure}

\begin{figure}[t]
    \centering
    \includegraphics[width = 0.43\textwidth]{ 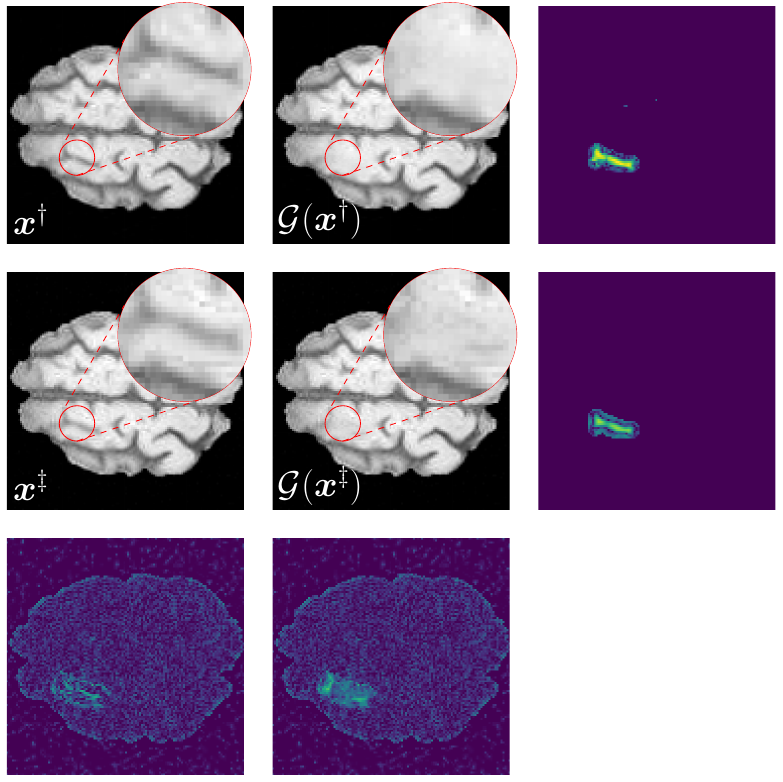}
    \caption{\small 
    \textbf{Non-uniform Fourier with random sampling, with PnP-BUQO.} Example where the structure of interest is located near the center left (for sampling ratio $M/N= 0.3$, $\mathrm{iSNR} = 40$ dB, with $\rho_\alpha = 0.43$). 
    Grayscale images show (left to right, top to bottom) the MAP $\xb^\dagger$, the inpainted MAP $\Gc(\xb^\dagger)$, the output of the proposed PnP-BUQO $\xb^\ddagger$ and its inpainted version $\Gc(\xb^\ddagger)$.
    Color images show differences of images on horizontal and vertical axis, in log scale. 
    }
    \label{fig:nufft_mri_structure}
\end{figure}

\begin{figure}[t]
    \centering
    \includegraphics[width = 0.43\textwidth]{ 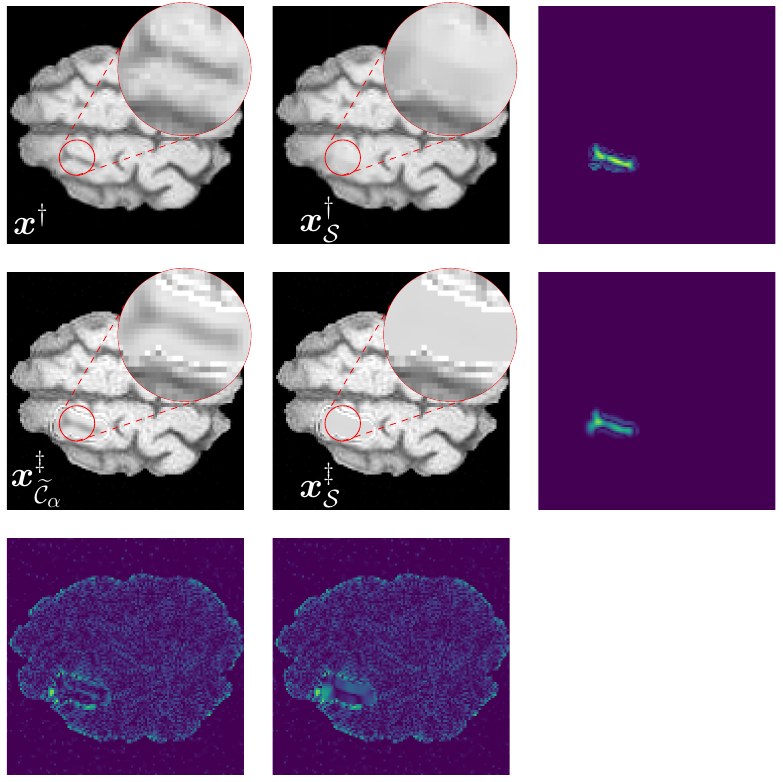}
    \caption{\small 
    \textbf{Non-uniform Fourier with random sampling, with BUQO.} Example obtained with the original BUQO method where the structure of interest is located near the center left (for sampling ratio $M/N= 0.3$, $\mathrm{iSNR} = 40$ dB, with $\rho_\alpha = 0.62$). Grayscale images show (left to right, top to bottom) the MAP $\xb^\dagger$, the structure-free MAP $\xb^\dagger_{\Sc}$, the output of the original BUQO method $\xb^\ddagger_{\tca}$ and $\xb^\ddagger_{\Sc}$.
    Color images show differences of images on horizontal and vertical axis, in log scale. 
    } 
    \label{fig:nufft_mri_structure_ori_buqo}
\end{figure}



\subsubsection{Radon transform with limited angles}

To illustrate the generalizability of the proposed method, we also validate the performance of PnP-BUQO on CT imaging data, considering the model described in Section~\ref{subsect:sim}.

We record the $\rho_\alpha$ values under various measurement conditions, summarized in a heatmap given in Figure~\ref{fig:ct_phantom_heatmap}, for the true structure shown in Figure~\ref{fig:ct_phantom_structure}. Results are obtained varying tomography measurement angles \(\alpha \in \{30, 90, 120\}\) and noise levels (iSNR \(\in \{20, 25, 30\}\) dB). Results again illustrate a monotonic relationship between the measurement uncertainty and the amount of support by the data. For visual inspection we show associated PnP-BUQO results for Figure~\ref{fig:ct_phantom_structure} when considering $90$ angles and iSNR~$=35$~dB. In this case $\rho_\alpha=0.536$, hence $H_0$ can be rejected.

Finally, for the sake of completeness, we test the PnP-BUQO method on a realistic CT image taken from the TCGA-LUAD data collection \cite{albertina2016cancer}
, shown in Figure~\ref{fig:ct_scan_structure}.  
Here as well we test for a true structure, and the resulting $\rho_\alpha=0.27$ suggests $H_0$ can be rejected.

\begin{figure}[t]
\centering
    \includegraphics[width =0.35\textwidth]{ 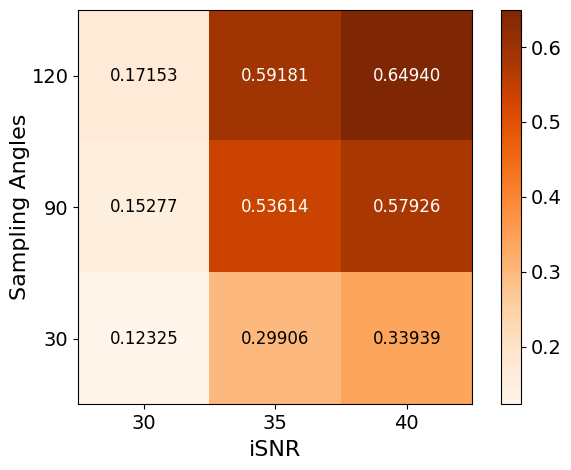}
    \caption{\small
    \textbf{Radon transform with limited angles (phantom)} Heatmap showing the structure confidence $\rho_\alpha$ (for the structure in Figure~\ref{fig:ct_phantom_structure}) with respect to sampling ratio (vertical) and measurement noise (horizontal).
    }
    \label{fig:ct_phantom_heatmap}
\end{figure}

\begin{figure}[ht]
    \centering
    \includegraphics[width = 0.43\textwidth]{ 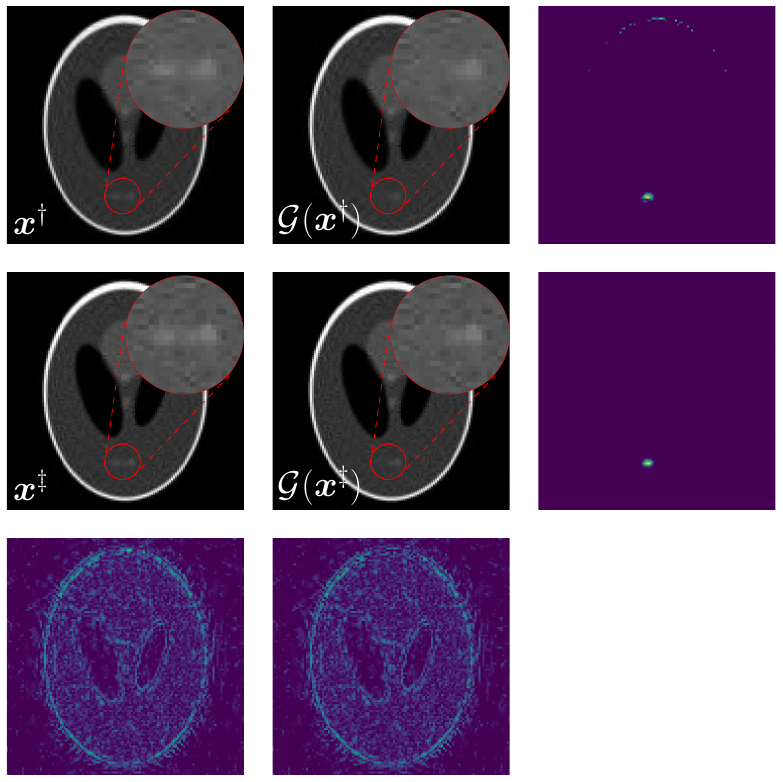}
    \caption{ \small
    \textbf{Radon transform with limited angles (phantom)} Example where the structure of interest is located near an edge (for 90 sampling angles, $\mathrm{iSNR} = 35$ dB, with $\rho_\alpha = 0.536$). 
    Grayscale images show (left to right, top to bottom) the MAP $\xb^\dagger$, the inpainted MAP $\Gc(\xb^\dagger)$, the output of the proposed PnP-BUQO $\xb^\ddagger$ and its inpainted version $\Gc(\xb^\ddagger)$.
    Color images show differences of images on horizontal and vertical axis, in log scale. 
    }
    \label{fig:ct_phantom_structure}
\end{figure}

\begin{figure}[ht]
    \centering
    \includegraphics[width = 0.43\textwidth]{ 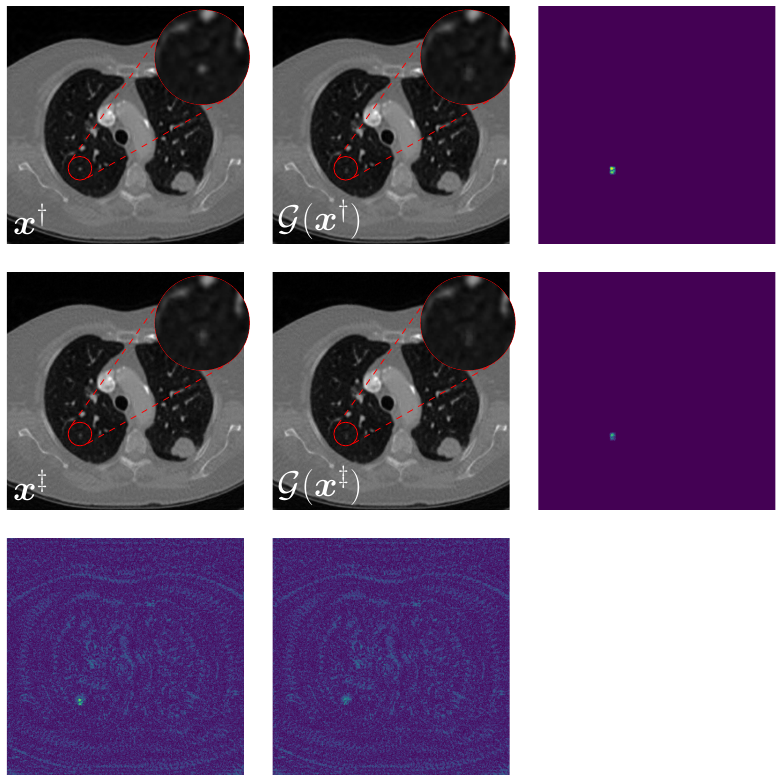}
    \caption{\small
    \textbf{Radon transform with limited angles (scan)} Example where the structure of interest is located near the center bottom (for 90 sampling angles, $\mathrm{iSNR} = 40$ dB, with $\rho_\alpha = 0.27$). 
    Grayscale images show (left to right, top to bottom) the MAP $\xb^\dagger$, the inpainted MAP $\Gc(\xb^\dagger)$, the output of the proposed PnP-BUQO $\xb^\ddagger$ and its inpainted version $\Gc(\xb^\ddagger)$.
    Color images show differences of images on horizontal and vertical axis, in log scale. 
    }
    \label{fig:ct_scan_structure}
\end{figure}

\section{Conclusion}\label{section:discussion}


In this work, we introduce PnP-BUQO, a plug-and-play Bayesian uncertainty quantification method built on primal-dual Condat-V\~{u} iterations, and empowered by a learned inpainting operator $\Gc$. 
The inpainting operator defines structure-free images through a fixed-point formulation. Such a formulation paves the way to investigate arbitrary structures and imaging modalities without bespoke operator design.
In particular, allowing $\Gc$ to be an arbitrary operator sets our framework free to apply in great generality. Moreover, it allows for the user to create custom operators to adapt the definition of $\Sc_\tau$ to any particular application. 

We carried out numerical experiments where the proposed approach has been tested on two Fourier undersampling problems (discrete and non-uniform) inspired by magnetic resonance imaging (MRI), as well as a computed tomography (CT) problem using the Radon measurement operator. Our analysis shows that the proposed PnP-BUQO provides more natural visual quality images than the original BUQO method \cite{RPW18, RPW19}. Notably, despite that the inpainting network is trained solely on brain MR images, our framework transfers seamlessly to CT imaging data, confirming its cross-modal generality.


One future aspect of the work would be on further improving its computational efficiency. As an example, the time complexity of Algorithm~\ref{alg:PD_Fix} may be reduced by removing backpropogation as in~\cite{Hurault2021}, or using a more lightweight NN, which preserves robustness and fidelity.

\section*{Acknowledgments}
The work of M. Tang was partly supported by the EPSRC grant EP/W522648/1 and A. Repetti was partly supported by the Royal Society of Edinburgh. The work of all the authors was partly supported by the EPSRC grant EP/X028860.


\appendix

\section{Hand-crafted structure definition} \label{appendix:struct}

As explained in Section~\ref{Ssec:BUQO-var-form}, the set $\Sc$ was defined in \cite{RPW18, PNAS2022} as an intersection of three convex sets, i.e., $\Sc = \Sc_1 \inter \Sc_2 \inter \Sc_3$, where, for every $s \in \{1, \ldots, 3\}$, $\Sc_s$ is convex, closed and proper.
In both articles, $\Sc_1 = [0,+\infty)^N$ to enforce positivity of intensity images. 
The second set was used to \textit{remove} the structure by controlling the energy in the structure area, i.e., $\Sc_2 = \menge{\xb \in \RR^N}{ \Mb \xb \in \Bc_2(\mu, \theta)}$, for some hyper-parameters $( \mu, \theta) \in [0,+\infty)^2$ that need to be manually fine-tuned.
Finally the third set was used to \textit{inpaint} the structure area, using information from its neighborhood. In particular, in \cite{RPW18} the authors used convolutions, while in \cite{PNAS2022} a total variation (TV) type smoothing was used. In both the cases, $\Sc_3$ also depends from hyper-parameters.

For the sake of completeness, we explain below how $\Sc_3$ was built in \cite{RPW18}, as this is used in our simulations. Specifically, the authors define $\Sc_3 = \menge{\xb \in \RR^N}{ \Mb \xb - \Lb \Mb^c \xb \in [-\beta, +\beta] }$, where $\beta>0$ 
and $\Lb \colon \RR^{N - N_\Mb} \to  \RR^{N_\Mb} $ a linear inpainting operator that consists in computing convolutions starting from the edges of the structure and iterating in the direction of its centre.
Building such an operator relies on a greedy technique, which is not scalable. It is further structure dependent, and must be rebuilt when assessing another structure, hence leading again to scalability issues. The TV-ball approach proposed in \cite{PNAS2022} does not have this issue, but can only be applied to piece-wise constant images, possibly causing ``patches'' to appear in $\xb_\Sc^\ddagger$.

\section{Algorithm for MAP estimate} \label{appendix:map}

As defined in Section \ref{subsec:map_estimate}, the resulting particular instance of problem~\eqref{eq:MAP} can be solved with primal-dual algorithms (see, e.g., \cite{Condat13, Vu13, Komodakis2014, combettes2012}). We propose to solve it with the primal-dual Condat-V\~u method, described in Algorithm~\ref{alg:PD_MAP} (where $ \| \cdot \|_S$ denotes the spectral norm). The output of Algorithm~\eqref{alg:PD_MAP} is denoted by $\xb^\dagger$.

\begin{algorithm}[!h]
\caption{Primal-Dual algorithm for MAP estimate}\label{alg:PD_MAP}
\begin{algorithmic}
\STATE{\textbf{Initialization:}  
Let $\xb^{0} \in \RR^N$, $\vb_1^{(0)} \in \RR^P$ and $\vb_2^{(0)} \in \eC^M$. \\
Let $(\mu_1, \mu_2, \sigma) \in [0,+\infty[^3$ be such that $ \sigma \big( \mu_1 \| \Psi \|_S^2 + \mu_2 \| \Phi \|_S^2 \big) < 1 $.
}
\STATE{\textbf{Iterations:}}
 \FOR{$k=0, 1, \ldots$}
  \STATE{$\widetilde{\vb}_1^{(k)} = \vb_1^{(k)} + \mu_1 \Psib \xb^{(k)}$}
  \STATE{${\vb}_1^{(k+1)} = \widetilde{\vb}_1^{(k)} - \mu_1 \mathrm{prox}_{\mu_1^{-1}\| \cdot \|_1}(\widetilde{\vb}_1^{(k)} / \mu_1) $}
  \STATE{$\widetilde{\vb}_2^{(k)}  = \vb_2^{(k)} + \mu_2 \Phib \xb^{(k)}$}
  \STATE{$\vb_2^{(k+1)} = \widetilde{\vb}_2^{(k)} - \mu_2 \Pi_{\Bc_2(\yb, \varepsilon)}(\mu_2^{-1}\widetilde{\vb}_2^{(k)})$}
  \STATE{$\widetilde\xb^{(k)} = \Pi_{[0,1]^N}\Big(\xb^{(k)}- \sigma \big( \Psib^\dagger \vb_1^{(k+1)} + \Phib^\dagger \vb_2^{(k+1)}\big)\Big)$}
  \STATE{$\xb^{(k+1)} = 2\widetilde\xb^{(k)} - \xb^{(k)}$}
 \ENDFOR
\end{algorithmic}
\end{algorithm}

\section{Primal-Dual algorithm for BUQO} \label{appendix:buqo}

In this section we detail the approach proposed in \cite{RPW18} to perform the hypothesis test corresponding to BUQO as described in Section~\ref{Ssec:BUQO-hyp}. With the variational formulation of set $\Sc$ given in Section~\ref{Ssec:BUQO-var-form}, we proceed to solve the minimization problem~\eqref{eq:min}. Following \eqref{eq:Calpha-balls}, both sets $\tca$ and $\Sc$ are defined as the intersection of multiple convex sets. Hence, a primal-dual splitting algorithm was proposed in~\cite{RPW19} to solve Problem~\eqref{eq:min}.  

We detail the iteration steps in Algorithm~\ref{alg:PD_BUQO}, using the same notation as in Section~\ref{Ssec:BUQO-var-form}. In particular, as emphasized earlier, the operator $\Lb$ is dependent on the choice of the mask $\Mb$, and must be adjusted (or even recomputed \cite{RPW18, RPW19}) for different choices of this mask. 

Let $(\xb_{\tca}^\ddagger, \xb_{\Sc}^\ddagger) \in \tca \times \Sc$ be the output of Algorithm~\ref{alg:PD_BUQO}. Then the hypothesis test is performed using Corollary~\ref{cor:1}, by evaluating $\| \xb_{\tca}^\ddagger - \xb_{\Sc}^\ddagger \|_2 $.
If the resulting quantity is greater than $0$, up to some accepted tolerance (see \cite{RPW18, RPW19}), then we conclude by Corollary~\ref{cor:1} that the null hypothesis $H_0$ may be rejected, with confidence $\alpha$.

\begin{algorithm}[h]
\caption{Primal-Dual algorithm for BUQO}\label{alg:PD_BUQO}
\begin{algorithmic}
\STATE{\textbf{Initialization:} 
Let $(\xb_{\tca}^{(0)}, \xb_{\Sc}^{(0)}) \in ([0,\infty[^{N})^2$, 
$(\vb_1^{(0)}, \vb_2^{(0)}) \in \eR^P \times \eC^M$, 
$(\ub_2^{(0)}, \ub_1^{(0)}) \in (\eR^{N_M})^2$ and $\ub_2^{(0)} \in\eR^{N_M}$. \\
Let $\mu_{1,1}, \mu_{1,2}, \mu_{2, 1}, \mu_{2, 2}, \sigma>0$ be such that $\sigma^{-1} - \mu_{1,1} \|\Psi\|_S^2 - \mu_{1,2} \|\Phi\|_S^2 - \mu_{2,1} \|\overline{\Lb}\|_S^2 - \mu_{2,2} > \gamma/2 $. 
}
\STATE{\textbf{Iterations:}}
 \FOR{$k=0, 1, \ldots$}
  \STATE{$\widetilde{\vb}_1^{(k)} = \vb_1^{(k)} + \mu_{1,1} \Psib \xb_{\tca}^{(k)}$}
  \STATE{$\vb_1^{(k+1)} = \widetilde {\vb}_1 - \mu_{1,1} \Pi_{\Bc_1(\zerob, \widetilde{\eta}_\alpha/\lambda)} \Big(\mu_{1,1}^{-1} \widetilde{\vb}_1^{(k)} \Big) $}
  \STATE{$\widetilde{\vb}_2^{(k)}= \vb_2^{(k)} + \mu_{1,2} \Phib \xb_{\tca}^{(k)}$}
  \STATE{$\vb_2^{(k+1)} = \widetilde{\vb}_2 - \mu_{1,2} \Pi_{\Bc_2(\yb, \varepsilon)} \Big( \mu_{1,2}^{-1} \widetilde{\vb}_2^{(k)} \Big) $}
  \STATE{$ \!\!\! \begin{array}{ll}
                \widetilde{\xb}_{\tca}^{(k)} \!\!\!\!  &= \Pi_{[0,1]^N}\Big( (1-\gamma \sigma) \xb_\tca^{(k)} + \gamma \sigma \xb_{\Sc}^{(k)} \\
                                    & \quad \quad \quad \quad \quad \quad - \sigma \Psib^\dagger \vb_1^{(k+1)}  - \sigma \Phib^\dagger \vb_2^{(k+1)} \Big)
            \end{array}$}
    \STATE{$ \xb_{\tca}^{(k+1)}  = 2 \widetilde{\xb}_{\tca}^{(k)} -  \xb_{\tca}^{(k)}$\\[0.2cm]}
  \STATE{$\widetilde {\ub}_1^{(k)} = \ub_1^{(k)} + \mu_{2,1} \overline{\Lb} \xb_{\Sc}^{(k)}$}
  \STATE{$\ub_1^{(k+1)} = \widetilde{\ub}_1^{(k)} - \mu_{2,1} \Pi_{[-\tau, \tau]^{N_M}} \Big( \mu_{2, 1}^{-1} \widetilde{\ub}_1^{(k)} \Big) $}
  \STATE{$\widetilde{\ub}_2^{(k)} = \ub_2^{(k)} + \mu_{2, 2} \Mb\xb_{\Sc}^{(k)}$}
  \STATE{$\ub_2^{(k+1)} = \widetilde{\ub}_2^{(k)} - \mu_{2, 2} \Pi_{\Bc_2 (\mu, \theta)}\Big( \mu_{2, 2}^{-1} \widetilde{\ub}_2^{(k)} \Big) $}
  \STATE{$ \!\!\! \begin{array}{ll}
                \widetilde{\xb}_{\Sc}^{(k)} \!\!\!\!  &= \Pi_{[0,1]^N} \Big( (1-\gamma \sigma) \xb_{\Sc}^{(k)} + \gamma \sigma \xb_\tca^{(k)} \\
                                    &   \quad \quad \quad \quad \quad \quad -\sigma \overline{\Lb}^\dagger \ub_1^{(k+1)}   - \sigma \Mb^\dagger \ub_2^{(k+1)}  \Big)
            \end{array}$}
    \STATE{$ \xb_{\Sc}^{(k+1)}  = 2 \widetilde{\xb}_{\Sc}^{(k)} -  \xb_{\Sc}^{(k)}$}
 \ENDFOR
\end{algorithmic}
\end{algorithm}

{\bibliography{sn-bibliography}}

\end{document}